\newcolumntype{P}[1]{>{\centering\arraybackslash}p{#1}}
\newtheorem{theorem}{Theorem}[section]
\newtheorem{lemma}[theorem]{Lemma}
\title{Multiplication triples from entangled quantum resources}
\author[1,$\dagger$]{Maxwell Gold}
\affil[1]{Department of Physics, University of Illinois Urbana-Champaign}
\author[2]{Eric Chitambar}
\affil[2]{Department of Electrical and Computer Engineering, University of Illinois Urbana-Champaign}
\affil[$\dagger$]{mjgold2@illinois.edu}
\date{\today}
\begin{document}

\maketitle

\begin{abstract}

An efficient paradigm for multi-party computation (MPC) are protocols structured around access to shared pre-processed computational resources. In this model, certain forms of correlated randomness are distributed to the participants prior to their computation.  The shared randomness is then consumed in a computation phase that involves public communication with efficient round complexity, and the computation is secure in this second phase provided the initial correlations were distributed securely.   Usually the latter requires some strong setup assumptions, such as a trusted dealer and private channels. We present a novel approach for generating these correlations from entangled quantum graph states and yield information-theoretic privacy guarantees that hold against a malicious adversary, with limited assumptions. Our primary contribution is a tripartite resource state and measurement-based protocol for extracting a binary \textit{multiplication triple}, a special form of shared randomness that enables the private multiplication of a bit conjunction. Here, we employ a third party as a Referee and demand only an honest pair among the three parties. The role of this Referee is weaker than that of a Dealer, as the Referee learns nothing about the underlying shared randomness that is disseminated. We prove perfect privacy for our protocol, assuming access to an ideal copy of the resource state, an assumption that is based on the existence of graph state verification protocols. Finally, we demonstrate its application as a primitive for more complex Boolean functionalities such as 1-out-of-2 oblivious transfer (OT) and MPC for an arbitrary $N$-party Boolean function, assuming access to the proper broadcasting channel.

\end{abstract}

\section{\label{sec:INT}Introduction}
The problem of distributed computation within the \textit{pre-processing model} \cite{Bendlin-2011-SE, Damgård-2012-MC}, offers a powerful methodology for constructing information theoretic (or provable) malicious-secure MPC. On the assumption of access to private \textit{shared randomness}, the private evaluation of an arbitrary function, $f$, can be built entirely from information theoretic broadcasting primitives, referred to as \textit{openings}, where parties announce padded information. The trick in computing $f$ with correctness lies in how the parties' randomness is correlated, such that from their openings the parties can locally reconstruct $f$. It is straightforward to prove security in this model, in the sense that a malicious adversary (one who deviates arbitrarily from the protocol) can only learn anything about an honest party's input from their local reconstruction of $f$ \cite{Beaver-1992-EM}. This all hinges upon the assumption of access to such forms of correlated randomness, however, where it is known to be impossible to disseminate these resources in a purely classical setting without resorting to strong information theoretic assumptions, such as a trusted Dealer that does not know $f$ and private peer-to-peer channels, or unproven hardness assumptions \cite{Impagliazzo-1989-Of}.  

In this work we tackle the problem of disseminating multiplication triples from entangled quantum resources. A multiplication triple is unique form of shared randomness that enables private multiplication of party inputs in this model. With quantum resources, we achieve information theoretic malicious-secure privacy based on assumptions much weaker than a trusted Dealer. In this paper, we formally define the problem of \textit{quantum triple distribution} (QTD), which similar to its namesake, works to instantiate a functionality for triple distribution (TD), and present an instantiating protocol that utilizes entangled resources to extract such a correlation. 

We work in a three-party computation (3PC) setting, utilizing a non-collaborating Referee ($\mathbf{R}$) to assist a pair, Alice ($\mathbf{A}$) and Bob ($\mathbf{B}$), in a protocol for disseminating binary multiplication triples. To this end, we introduce a particular tripartite quantum resource, represented as a graph state, $\ket{G_{\wedge}}$, from which a triple can be extracted through local measurements made by the parties. We show that so long as any pair among $\{\mathbf{A},\mathbf{B},\mathbf{R}\}$ act honestly, a private binary multiplication triple can be extracted without communication, from the resource state we introduce. In order to efficiently prove the privacy offered by our resource state we assume an ideal quantum resource is distributed to each party when executing our protocol. The appeal is that there are protocols for distributing graph states with an information theoretic certificate of authenticity \cite{McKague-2014-SG, Baccari-2020-SB, Markham-2020-SP}. In particular, there exist resource efficient graph state \textit{verification} protocols, which have been shown to handle malicious networks of players \cite{Unnikrishnan-2022-Vg}, and support composable security definitions on their own \cite{Colisson-2024-Ag}. This lets us assume access to an ideal functionality $\mathcal{F}_{\ket{G_{\wedge}}}$ that distributes the resource state, when proving the security of our protocol.

A feature in utilizing quantum resources for cryptography is in there ability to encode classical shared randomness. Their known advantage arises from how the no-cloning theorem enables some level of cheat-sensitivity when transmitting information through quantum channels. The canonical example of quantum key distribution (QKD) allows parties to tackle the classical problem of key distribution (KD), and establish shared bipartite secret key in a way that is statistically resilient against an active eavesdropper \cite{Bennett-1984-Qc, Ekert-1991-Qc}, and composable in any cryptographic system \cite{Ben-Or-2005-UC}. By utilizing such resources for provable secure TD, a protocol for QTD removes several assumptions typical of classical analogs. Namely, there is no need for a trusted Dealer who does not know $f$, and no assumptions on private channels are necessary.

Hence, instantiating a protocol for QTD, along with the proper ideal broadcast channel, allows a group to tackle the problem of secure function evaluation from start to finish with information-theoretic malicious-secure privacy guarantees, where assumptions come in from three distinct sources: 
\begin{itemize}
    \item The various assumptions tied to the certification of the resource states.
    \item The assumption of at least an honest pair acting on each tripartite resource.
    \item The assumption of an ideal broadcasting channel for the online phase.
\end{itemize}
The advantage that quantum resource provide in this problem comes from the combined assumptions from the first two bullets, which make up the necessary requirements to build an offline phase from QTD. In this paper, we provide a solution for the second bullet. Furthermore, despite requiring an honest majority when strictly working in a 3PC setting, by further forcing the Referee be honest in our protocol, an $(N+1)$-party setting (not counting the Referee against $N$) can achieve privacy against a malicious majority, given $N>3$, from only QTD and an ideal broadcast channel.

\subsection{\label{sec:INT-related}Related work}
In \cite{Gold-2025-Hp}, we presented a version of our primitive embedded in a larger 3PC protocol that allowed for the evaluation of a limited class of Boolean functions that only admit quadratic bit conjunctions. While this was sufficient for privately computing any function taking inputs from only two parties, we show here how this can be extended to any multiparty function. Furthermore, the primitive itself has been retooled, such that it directly yields a binary multiplication triple, from non-interactive input-independent quantum measurements made on an ideal resource state. 

To the best of our knowledge there is little existing work regarding protocols that focus on efficiently disseminating multiplication triples from entangled quantum states. In \cite{Roehsner-2021-Po}, it was shown how entanglement can be used to construct shared tables for probabilistic one-time programs, which can function as a computational resource in a manner similar to noisy multiplication triples. This is distinct from our work, as correctness is always guaranteed from honesty in our primitive. In a different but related vein, \cite{Yang-2014-Qo} presents a protocol for computing OT from quantum resources, employing an untrusted third party. This further motivates the sense in which multipartite entanglement can enable cryptographic primitives, beyond what can't be done in the bipartite case. 

Below we review some of the wider context for our work.

\medskip\noindent\textbf{Removing assumptions from information theoretic MPC.} 
At a glance, MPC describes the problem of two or more parties who wish to securely evaluate a function, $f$, of their shared inputs, in such a way that no information is revealed about a party's input to an adversary, beyond what is implied by $f$ itself \cite{Yao-1982-Ps}. In what sense it is meant that ``no information is revealed'' to the adversary is subject to the set of cryptographic assumptions one builds into the protocol they design to solve this problem. The strongest guarantees then come when one can \textit{prove} security against the adversary, without any assumptions on computational hardness. In terms of privacy, these information theoretic guarantees come in two forms: \textit{perfect} privacy, where no information is leaked by the protocol whatsoever, and \textit{statistical} privacy, where the amount of information leaked to the adversary is bound by their ability to repeat the protocol many times.

Information theoretic MPC in the context of ideal peer-to-peer channels is well understood, where solutions universal to any $f$ in the case on honest majority are known \cite{Ben-Or-1988-CT, Chaum-1988-Mu}. The amount of communication in this model, however, is particular costly when performing multiplication between distinct parties' inputs. MPC within the pre-processing model circumvents this issue by breaking up the computation into \textit{offline} and \textit{online} phases, where most of the communication occurs in the offline phase for the process of pre-computing triples and other forms of shared randomness. The key efficiency lies in reducing the communication cost to compute multiplication online. The evaluation of a logical conjunction of bits in the online phase requires broadcasting only $\mathcal{O}(2)$ bits across public channels, versus $\mathcal{O}(n)$ bits across private channels in \cite{Ben-Or-1988-CT}.

Before moving on, we note that within the larger scope of MPC's more recent history, a protocol based on secret sharing that only guarantees the \textit{privacy} of the parties' inputs is somewhat limited. The first notions of \textit{verifiable} secret sharing were introduced in \cite{Chor-1985-Vs}. The work of \cite{Bendlin-2011-SE, Damgård-2012-MC} demonstrates how to construct representations of triples that include authentication codes for validating the sharing explicitly, from somewhat homomorphic encryption. In our case, the sharing distributed by our protocol contains just the information necessary to function as a triple, with no means of authenticating the share. Hence, any secret sharing-based MPC that calls on the protocol we present as a primitive will only guarantee \textit{security with abort} (where the adversary can guarantee only they receive an output), and not anything stronger, such as \textit{fairness}, or \textit{guaranteed output delivery} \cite{DovGordon-2015-CM}. 

\medskip\noindent\textbf{Impossibility results for deriving provable cryptography from quantum resources.} It is well known that quantum states are not commitments \cite{Lo-1997-QB, Mayers-1997-US}. In a quantum bit commitment protocol, as an example, if the Receiver is unable to verify that their state is disentangled from the environment, i.e. that it is pure, these commitment schemes allow a cheating Sender to ``fake'' their commitment by maintaining their own pure description of the quantum system. In this way the execution of the protocol appears to the Receiver indistinguishable from an honest execution, as in either case their local description of the quantum system remains unchanged, but the commitment is not binding for the Sender. Other important primitives, such as OT and fair coin tossing, are also known to be insecure against an adversary who \textit{purifies} their actions with unbounded resources \cite{Lo-1997-Iq, Mochon-2007-Qw}. It has been proven that all non-trivial two-party primitives necessarily leak some information in this \textit{quantum honest-but-curious} standard for security \cite{Salvail-2009-PT}. Solutions exist that rely on bounding either the classical or quantum computing power of the adversary \cite{Damgård-2005-CB}, however we make no such restrictions. It follows from the impossibility of OT in the two-party quantum setting that a multiplication triple cannot be securely extracted from only a bipartite quantum system.

In addition to these no-go theorems, multiparty secret sharing protocols employing multipartite entangled quantum states are also know to be susceptible to a so-called \textit{participant attack}. In this setting, information obtained from an honest party revealing the choice of measurement basis used in the protocol allows an adversary who delayed their measurement to obtain the secret fully \cite{Karlsson-1999-Qe, Qin-2007-CH, Walk-2021-SC}. Solutions to such a problem involving carefully addressing when these measurement bases are revealed, or by choosing a quantum state with a particular secret sharing \textit{access structure} that limits what an adversary can learn \cite{Markham-2008-Gs}. Our protocol applies elements of both approaches.

\medskip\noindent\textbf{Graph state verification protocols.} A useful tool in designing cryptographic protocols around multipartite entangled resources, such as graph states, is the existence of resource efficient \textit{verification} protocols, which work to certify the state against the possibility of a corrupt Source that distributes these resources unfaithfully. These protocols function through requesting many copies of the resource graph state, from which they test the set of possible classical correlations that uniquely defined by the structure of the corresponding graph \cite{Takeuchi-2019-Rv, Markham-2020-SP}. With each copy of the resource, recipients randomly choose together between consuming the resource through the designated protocol, and performing rounds of testing. The tests consist of measuring local components of an element of the stabilizer of the graph state, from which the parties always receive a combined measurement outcome of $+1$, when the state is distributed honestly. A sampling argument then allows a verifier within the group of recipients, with who the parties need only communicate publicly, to exponentially bound the probability that a single target round of execution will operate on the correct resource state. A practical assumption built into these protocols is that measurement devices behave honestly, but no restrictions are placed on the Source who distributes the resources unfaithfully. 

Of significant interest, are versions of these verification protocols that additionally tolerate an adversarial presence within the network of recipients, colluding with the corrupt Source. In particular \cite{Unnikrishnan-2022-Vg} demonstrated one such protocol satisfying these requirements within the common reference string (CRS) model. The CRS is used to independently decide in which rounds the resource is tested (where a verifier is also picked at random by the CRS) or consumed by the protocol. More recently, \cite{Colisson-2024-Ag} demonstrated that all graph state verification protocols support composable security definitions, with minor modifications, including the protocol from \cite{Unnikrishnan-2022-Vg}. Together, this implies that we can in principle separate the problem of distributing graph states from any a protocol that consumes that state as a resource, and reduce the necessary assumptions to privately distribute these resources (the first bullet in the above) to that of a CRS. Hence, in this paper we assume an ideal copy of the resource state is already in position of the parties running the protocol, such that the proofs we present are tightly focused on the nature of correlations encoded in independent copies of the resource.

\subsection{\label{sec:INT-contribution}Primary contributions}
We give a protocol for generating a sharing of a binary multiplication triple, utilizing distributed correlations obtained from local quantum measurements made on a graph state. Classical information can be encoded in the state, in the form of relative phases between neighboring qubits in the graph. This relative \textit{phase information} then manifests itself in the measurements outcomes obtained in our protocol, when parties measure local components of a stabilizer of the graph state. Beaver's original notion for securely executing a multiplication gate involved first executing the gate on random inputs and thereafter correcting the outcome to reflect the desired conjunction by public broadcast of the inputs with proper padding \cite{Beaver-1992-EM}. In the same vein, our protocol distributes private measurement outcomes that encodes a random bit conjunction, which is private from all the parties. 

\medskip\noindent\textbf{Delegated computation.} We work in a 3PC setting, utilizing a non-collaborating Referee ($\mathbf{R}$) to assist a pair, Alice ($\mathbf{A}$) and Bob ($\mathbf{B}$), in the protocol. We prove that our protocol provides composable perfect privacy of the distributed sharing, against a malicious adversary corrupting a single party $\mathbf{X}\in\{\mathbf{A},\mathbf{B},\mathbf{R}\}$, assuming an ideal copy of the graph state is distributed. In this sense, our security requirement is that of an honest majority, in this 3PC setting, which we refer to as an \textit{honest pair}. The protocol breaks in trivial ways if the adversary corrupts any two parties in the protocol. Note, however, that we prove that our measurement-based protocol can be implemented composably, such that when triples from our protocol are input into a higher level secret sharing-based $(N+1)$-party MPC, security of the broadcast information is still guaranteed from only an honest pair among the three who obtained the triple, regardless of the number of corrupted parties outside of the primitive. In other words, for a $(N+1)$-party setting employing QTD as a primitive for generating triples, an honest $\mathbf{R}$ ensures the privacy of any honest individual's input. In contrast, if $\mathbf{R}$ is corrupted, then all other parties must remain honest to guarantee the privacy of their inputs.

\medskip\noindent\textbf{Conditional stabilizer tests.} The correlations extracted from the measurement sequence in our protocol must remain private. In typical secret sharing protocols with graph states, the local components of a fixed generating operator of the state's stabilizer are measured, such that the sum of the parties' measurement outcomes directly encodes the secret. In our case, the final choice of measurement basis each party makes at the end of the sequence is conditioned on private information they obtained in prior steps. Regardless of the choice of these local bases, these local measurements together still make-up an element of the stabilizer. In addition to giving us a correlation of a desirable form, we avoid the possibility of a participant attack by keeping these measurement bases private. 

\subsection{\label{sec:INT-outline}Organization of contents}
In Sec. \ref{sec:PRE}, we establish notation and present background on binary multiplication triples, as well as graph states. In Sec. \ref{sec:QTD}, we formally define the problem of QTD from an ideal functionality for TD, and compare solutions to the classical case of achieving TD through a trusted Dealer and private channels. In Sec. \ref{sec:OVR}, we describe in detail a 3PC protocol in which a tripartite graph state encodes a binary multiplication triple in the outcomes of local quantum measurements, and provide a high-level picture of the privacy of this quantum resource. In Sec. \ref{sec:SEC}, we formally prove the security of this protocol against a malicious adversary, and discuss its composability. Finally, in Sec. \ref{sec:APP}, we share applications of QTD as a primitive, demonstrating how functionalities for evaluating referee-assisted OT and $N$-party Boolean functions can be constructed from only our primitive and an ideal public broadcast channel.

\section{\label{sec:PRE}Preliminaries}
We adopt the following consistent set of notation regarding party labels and secret sharings of classical information. Thereafter, background on multiplication triples is reviewed. Lastly, we present a few key concepts surrounding graph states and secret sharing that will be employed in our proofs of correctness and privacy.

\medskip\noindent\textbf{Notation.} Ideal functionalities and protocols and will be labeled by $\mathcal{F}$ and $\Pi$, respectively. As noted above, we denote the set of parties in our 3PC protocol as $\{\mathbf{A},\mathbf{B},\mathbf{R}\}$ and use $\mathbf{X}$ at times to refer to an arbitrary member. When working in a larger $N$-party setting, we denote the set of parties $\{\mathbf{P}_1,\cdots,\mathbf{P}_N,\mathbf{R}\}$, where $\{\mathbf{A},\mathbf{B}\}$ are played be any pair in $\{\mathbf{P}_1,\cdots,\mathbf{P}_N\}$ during each call to our protocol. When discussing the completely classical analog, $\mathbf{D}$, will refer to the trusted Dealer. When dealing with a malicious adversary, we will use the notation $\mathbf{X}^{*}$ to specify that party $\mathbf{X}$ has been corrupted and deviates arbitrary from the protocol. For classical variables of $\mathbb{Z}_2$, such as inputs, messages, private measurements outcomes, and encoded phase information, we will use $p\oplus q$ and $pq$ to denote addition and multiplication modulo 2, respectively, of these values. $[p]$ denotes an additive sharing of such a variable $p$, and $[p]_{\mathbf{X}}$ denotes the share belonging to $\mathbf{X}$, such that $p=\sum_{\mathbf{X}}[p]_{\mathbf{X}}$. 

\medskip\noindent\textbf{Binary multiplication triples.} A binary multiplication triple is defined as triple of random but correlated bits $\{p,q,pq\}\in\mathbb{Z}_{2}^{3}$, for which there exists an additive sharing $[p], [q], [pq]$. Given such a distributed correlation, parties can obtain a sharing of a bit conjunction $[ab]$, with definite inputs $a,b$, by first, simultaneously opening corrections $c_{a} = a \oplus p$ and $c_{b} = b \oplus q$, and then, locally computing
\begin{equation}
    [ab]_{\mathbf{X}} \leftarrow c_{a}c_{b} \oplus c_{a}[q]_{\mathbf{X}} \oplus c_{b}[p]_{\mathbf{X}} \oplus [pq]_{\mathbf{X}}. \label{eq:Trip-Id}
\end{equation}
Assuming the triple was securely distributed by a trusted Dealer, $\mathbf{D}$, who does not know $f$, and no parties hold any prior knowledge about either input, it is straightforward to prove the information-theoretic security of this protocol, as
\begin{equation}
     P(a=0|c_{a})=P(b=0|c_{b}) =1/2. \label{eq:Trip-Sec}
\end{equation}
Given an ideal broadcast channel, where an eavesdropper cannot modify messages sent, these openings serve as information theoretic primitives that can trivially be run concurrently or sequentially, depending on the round complexity required to compute $f$.

In our work, we consider a special case of a triple, where we restrict that $\mathbf{D}$ set $[p]_{\mathbf{X}}\leftarrow 0$ for all parties, except the party $\mathbf{A}$ possessing the input $a$, where they instead assign $[p]_{\mathbf{A}}\leftarrow p$. Security still holds as
\begin{equation}
    P(b=0|c_{b},p) = P(q=0|c_{b},p)=1/2. \label{eq:Trip-Sec-A}
\end{equation}
The same argument applies when $\mathbf{D}$ also gives the party $\mathbf{B}$ possessing $b$ private access to $q$. While there is no advantage to distributing triples this way, other than the minor convenience of reducing the number of individual broadcast messages required for each opening, we show that this is the natural way a triple is extracted from the resource in our protocol. 

\begin{figure}[]
    \centering
    \framebox{
        \begin{minipage}[t]{0.99\textwidth} 
            \centerline{Functionality $\mathcal{F}_{\text{TD}}$}
            \medskip
            Sample $p,q\leftarrow \mathbb{Z}_2$. Distribute $p$ to $\mathbf{A}$, $q$ to $\mathbf{B}$, and shares $[pq]_\mathbf{X}$ to all $\mathbf{X}\in\{\mathbf{A},\mathbf{B},\mathbf{R}\}$.
        \end{minipage}
    }
    \caption{An ideal functionality $\mathcal{F}_{\text{TD}}$ for distributing binary multiplication triples in a three-party setting.}
    \label{fig:F-TD}
\end{figure}

\medskip\noindent\textbf{Graph states.} Let $G=(V,E)$ be an arbitrary graph, where $V=\{1,\cdots,n\}$ and $E\subseteq V\times V$. The $n$-qubit graph state $\ket{G}$ is a unique eigenstate of a set of \textit{correlation operators}, $K_i$, that generate the stabilizer, $S$, of the state,
\begin{equation}
    S=\left\langle \left\{ K_{i} = X_i \prod_{j \in V} Z_j^{\Gamma(i,j)} \,\middle|\, \forall i\in V  \right\}\right\rangle, \label{eq:S-G-arb}
\end{equation}
where $\Gamma(i,j)$ is the typical adjacency matrix of $G$ \cite{Hein-2004-Me}. Explicitly, we fix $\ket{G}$ to be the state, such that $\Tr[K_i \ketbra{G}]=1$ for every $i\in V$, and
\begin{equation}
    \ketbra{G} = \frac{1}{2^{n}}\prod_{i\in V}(\mathbb{I}+K_i). \label{eq:G-rho}
\end{equation}
Note that every state with a unique stabilizer is equivalent to a graph state by a set of local unitary operations \cite{VanDenNest-2004-Gd}.

We can encode classical information in a graph state by defining a state $Z^{\vec{r}}\ket{G}$, where $\vec{r}\in \mathbb{Z}_{2}^{n}$ and $Z^{\vec{r}}=\prod_{i\in V}Z^{r_{i}}$. We refer to $r_{i}$ as the \textit{phase information} encoded in qubit $i$ of $\ket{G}$, and each correlation operator transforms accordingly as $K_{i}\to (-1)^{r_{i}}K_{i}$. 

\begin{lemma}[Stabilizer tests.]\label{lem:G-corr}
Let $\Lambda \subseteq V$, and $K_{\Lambda}=\prod_{i\in\Lambda} K_{i}$ be an operator, such that $(-1)^{\bigoplus_{i\in\Lambda}r_{i}}K_{\Lambda}\in S$. Suppose that $K_{\Lambda}$ acts non-trivially on only qubits in $\Omega$, such that $\Lambda\subseteq\Omega\subseteq V$. By measuring each qubit in $\Omega$ of $Z^{\vec{r}}\ket{G}$ individually in the basis specified by $K_{\Lambda}$, one obtains a set of correlated measurement outcomes $\{m_{i}\in\mathbb{Z}_2\}_{i\in\Omega}$, such that
\begin{equation}
    \bigoplus_{i\in \Omega}m_i =  \bigoplus_{i\in \Lambda} r_i, \label{eq:S-Meas}
\end{equation}
with unity probability. 
\end{lemma}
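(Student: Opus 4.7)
The plan is to reduce the measurement identity on the encoded state $Z^{\vec{r}}\ket{G}$ to the operator identity that $K_\Lambda$ stabilizes the encoded state with signed eigenvalue $(-1)^{\bigoplus_{i\in\Lambda}r_i}$, and then to transfer that operator eigenvalue into a deterministic XOR of single-qubit measurement outcomes via the usual fact that a Pauli observable returns its eigenvalue with probability one on one of its eigenstates.

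First I would verify how each generator $K_i$ acts on the encoded state. Commuting $K_i = X_i \prod_{j\in V}Z_j^{\Gamma(i,j)}$ through $Z^{\vec{r}}$, every $Z$ factor commutes with $Z^{\vec{r}}$, and the only nontrivial swap is between $X_i$ and $Z_i^{r_i}$, producing a single factor $(-1)^{r_i}$. Together with $K_i\ket{G}=\ket{G}$ this yields $K_i Z^{\vec{r}}\ket{G} = (-1)^{r_i} Z^{\vec{r}}\ket{G}$, and taking the product over $i\in\Lambda$ gives the operator eigenvalue equation
\begin{equation}
    K_\Lambda\, Z^{\vec{r}}\ket{G} \;=\; (-1)^{\bigoplus_{i\in\Lambda}r_i}\, Z^{\vec{r}}\ket{G},
\end{equation}
which is precisely the content of the hypothesis that the signed operator $(-1)^{\bigoplus_{i\in\Lambda}r_i}K_\Lambda$ stabilizes the encoded state.

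Next I would translate this into a statement about local measurements. Because the $K_i$ pairwise commute, $K_\Lambda$ is a Hermitian Pauli; the hypothesis that it acts nontrivially only on qubits in $\Omega$ lets me write $K_\Lambda = \bigotimes_{i\in\Omega} P_i$, with each $P_i\in\{X,Y,Z\}$ indicating the local measurement basis on qubit $i$. Measuring each qubit $i\in\Omega$ in the eigenbasis of $P_i$ yields a bit $m_i\in\mathbb{Z}_2$ with $(-1)^{m_i}$ equal to the observed eigenvalue of $P_i$, and since $Z^{\vec{r}}\ket{G}$ is an eigenstate of $\bigotimes_{i\in\Omega}P_i = K_\Lambda$, the product of the individual outcomes coincides deterministically with that eigenvalue: $\prod_{i\in\Omega}(-1)^{m_i} = (-1)^{\bigoplus_{i\in\Lambda}r_i}$. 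Exponent matching gives the claimed identity $\bigoplus_{i\in\Omega}m_i = \bigoplus_{i\in\Lambda}r_i$ with unit probability.

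The main subtlety I anticipate is sign bookkeeping in the decomposition $K_\Lambda = \bigotimes_{i\in\Omega}P_i$: although each $K_i$ is Hermitian and they all commute, reordering the constituent $X$ and $Z$ factors onto individual qubits can a priori introduce an overall $\epsilon\in\{\pm 1\}$ relative to a naive tensor product (for instance, $K_1K_2$ on an edge produces $Y\otimes Y$ with no sign, but longer products can pick up a minus sign from interleaved $XZ$ pairs). This sign must either be checked to equal $+1$ in each application of the lemma, or absorbed into the convention for $m_i$ by declaring $(-1)^{m_i}$ to record the eigenvalue of the $P_i$ appearing in the canonical factorization of $K_\Lambda$. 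Once this convention is fixed, the two steps above combine directly into the lemma.
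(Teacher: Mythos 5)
Your proof is correct and follows essentially the same route as the paper: the appendix establishes exactly this eigenvalue argument, that $Z^{\vec{r}}\ket{G}$ is stabilized by $(-1)^{r_i}K_i$ so measuring the local components of a stabilizer element yields perfectly correlated outcomes, and the lemma is the product-over-$\Lambda$ generalization of that fact. The sign subtlety you flag in the factorization $K_\Lambda=\epsilon\bigotimes_{i\in\Omega}P_i$ is genuine, and the paper resolves it the way you suggest, by folding any $\epsilon=-1$ into the designated local bases (as in the $-Y_{11}$ entries of Table~\ref{tab:C-PiQTD}), so the outcome convention already accounts for it.
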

This collection of local measurements makes up a \textit{stabilizer test}, as it tests whether $K_{\Lambda}\in S$ up to an overall phase, $\bigoplus_{i\in\Lambda}r_i$. Eq. \ref{eq:S-Meas} is the basis of classical secret sharing protocols using graph states. By carefully choosing which parties hold the qubits in $\Omega$, an additive sharing of the phase information encoded in the qubits in $\Lambda$ can be distributed. Furthermore, given a graph state where $\vec{r}=0$, these measurements are also the intuition behind verification protocols.

In addition to these correlated measurement outcomes, for larger graph states there are often qubits in the neighborhood of those in $\Omega$, which $K_{\Lambda}$ acts trivially on, and hence, aren't measured in obtaining the correlations associated with $K_{\Lambda}$. In general, local measurements transform the graph state about the neighborhood of the measured qubits. This can transfer or delete the phase information of the measured qubit, depending on the basis of the measurement.

(For a more detailed discussion on graph states, see Appendix \ref{sec:APP-Graph states}.)

\section{\label{sec:QTD}Quantum triple distribution}
The goal of this work is to distribute the appropriate shares of binary multiplication triple $\{p,q,pq\}$ in a 3PC setting, in such a way that no one single party knows the value of the bit $pq$. In this sense we say that the triple disseminated from this process is \textit{private}, as the correlation encoded in the sharing remains a uniformly random unknown for all parties. Fig. \ref{fig:F-TD} describes the ideal functionality for such a process, between a triple of parties $\{\mathbf{A},\mathbf{B},\mathbf{R}\}$.

\medskip\noindent\textbf{Triple distribution from classical resources.}
In a fully classical setting, where parties posses no quantum resources, it is impossible instantiate $\mathcal{F}_{\text{TD}}$ in a provably secure way, without strong assumptions involving honesty and private channels. One such case is to assume a trusted Dealer, $\mathbf{D}$, that does now know $f$ of the encompassing MPC. A protocol for classical TD, $\Pi_{\text{CTD}}$, then involves $\mathbf{D}$ locally generating $\{p,q,pq\}$ and sending $\{p,[pq]_{\mathbf{A}}\}$ to $\mathbf{A}$ and $\{q,[pq]_{\mathbf{B}}\}$ to $\mathbf{B}$. To more closely match the functionality introduced above, there is no additional cost to further allowing $\mathbf{D}$ to keep a share $[pq]_{\mathbf{D}}$ as well. In Fig, \ref{fig:CTD-prob}, we depict a diagram for the network of parties in this setting. The notion of a private triple is replaced here by $\mathbf{D}$'s lack of information for $f$. Hence, none of the parties who would consume the triple in an MPC uniquely know $pq$, at cost of incorporating a single party that does not know $f$.

\begin{figure}[]
    \centering
    \includegraphics[width=0.5\textwidth]{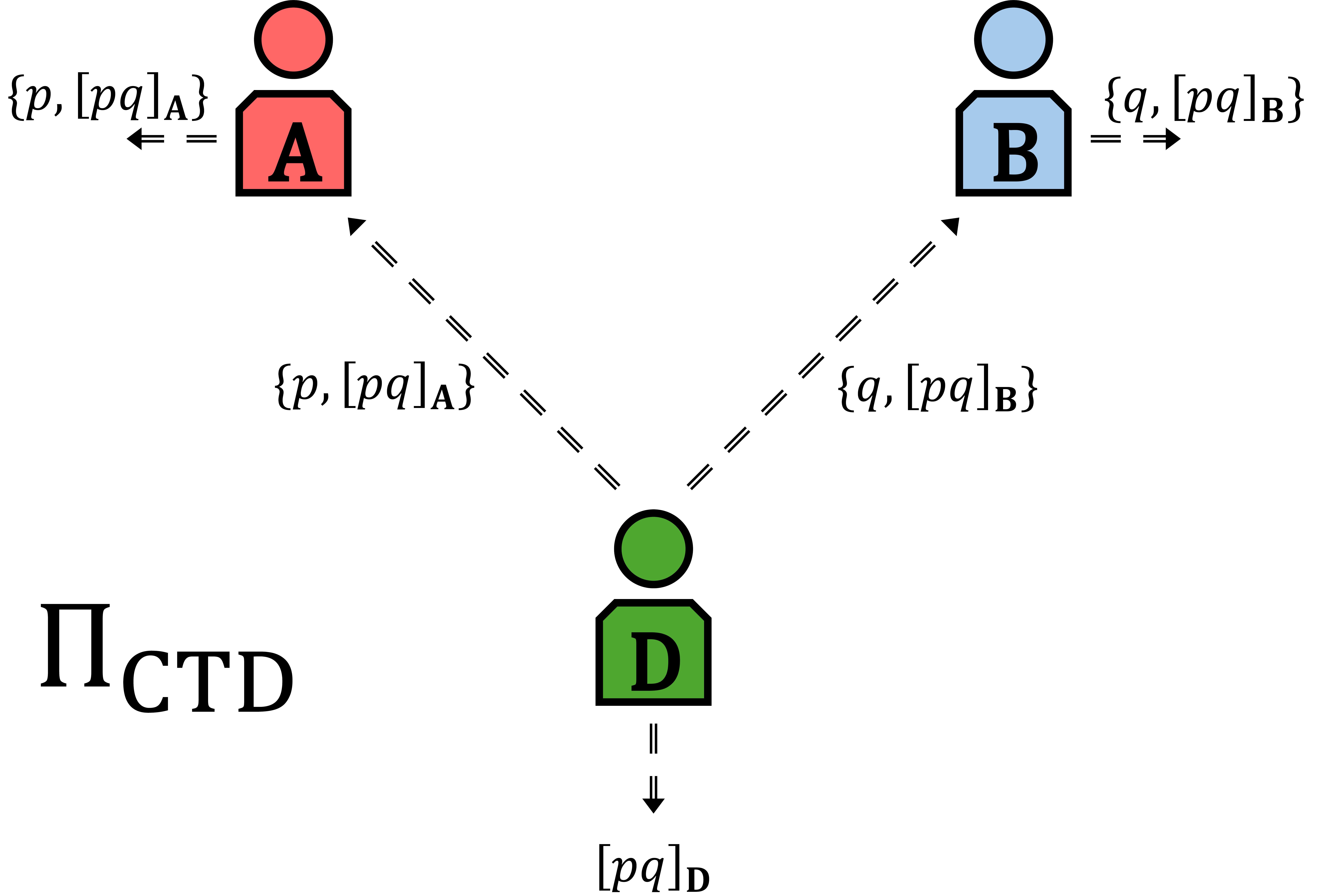}
    \caption{A network diagram representing a classical protocol for implementing $\mathcal{F}_{\text{TD}}$, in a modified setting. Dashed double-lined arrows indicate the distribution of private classical information. In order to instantiate TD with information theoretic privacy from only classical resources, an trusted $\mathbf{D}$ is required, who generates $\{p,q,pq\}$ locally and sends the appropriate bits over private channels to $\mathbf{A}$ and $\mathbf{B}$. As $\mathbf{D}$ must remain honest here, they replace $\mathbf{R}$ in $\mathcal{F}_{\text{TD}}$, and can optionally choose to retain a share of $[pq]$. In this setting it is impossible to generate the triple in such a way that no one party knows the value of $pq$.}
    \label{fig:CTD-prob}
\end{figure}

\begin{figure}[]
    \centering
    \includegraphics[width=0.99\textwidth]{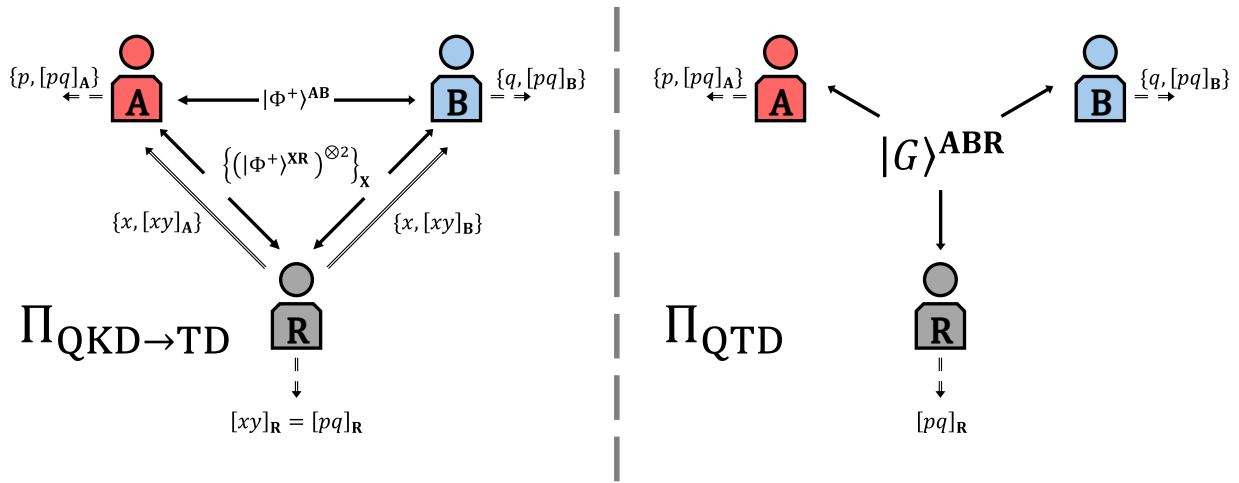}
    \caption{Network diagrams for a pair of protocols implementing $\mathcal{F}_{\text{TD}}$, which distribute private triples (where no one party knows $pq$) from entangled quantum resources. Solid single-lined arrows indicate the distribution of quantum information, while solid double-lined arrows indicate classical information sent over public channels. Information theoretic privacy in achieved in both protocols assuming an honest pair. (left) A protocol $\Pi_{\text{QKD}\to\text{TD}}$ that instantiates $\mathcal{F}_{\text{TD}}$ from QKD. Bell pairs, $\ket{\Phi^{+}}$ are distributed between parties in the network in order to establish symmetric shared secret keys. $\mathbf{R}$ first locally generates $\{x,y,xy\}$ and then uses their keys with $\mathbf{A}$ and $\mathbf{B}$ to privately send them their respective bits, across a public channel. $\mathbf{A}$ and $\mathbf{B}$, then use their own secret key to transform $\{x,y,xy\}\to\{p,q,pq\}$, which is private. (right) A protocol $\Pi_{\text{QTD}}$ that directly generates and distributes the bits in $\{p,q,pq\}$ from a single quantum resource, represented as a multipartite graph state, $\ket{G}$. No classical information is communicated in this protocol.}
    \label{fig:QTD-prob}
\end{figure}

We aim to build a protocol for disseminating private triples in such a way that we can replace $\mathbf{D}$ with a Referee $\mathbf{R}$, who both can know $f$ and need not always be trusted. With an honest majority and access to private channels there is a straightforward solution. First $\mathbf{R}$ generates a triple $\{x,y,xy\}$ locally and sends $\{x,[xy]_{\mathbf{A}}\}$ to $\mathbf{A}$ and $\{x,[xy]_{\mathbf{B}}\}$ to $\mathbf{B}$. Next, $\mathbf{A}$ and $\mathbf{B}$ share a symmetric secret key across their own private channel, which they use to transform the triple they received from $\mathbf{R}$ into one that is private, by setting  $p\leftarrow x\oplus k$, $q\leftarrow y\oplus k $, and $[pq]_{\mathbf{X}}=[xy]_{\mathbf{X}}\oplus k$, for each $\mathbf{X}\in\{\mathbf{A},\mathbf{B}\}$. If $\mathbf{R}$ chooses to retain a share $[xy]_{\mathbf{R}}$ of the original triple, they can simply set $[pq]_{\mathbf{R}}\leftarrow [xy]_{\mathbf{R}}$. Perfect privacy follows from having an honest majority. For example, the honesty of $\mathbf{B}$ and $\mathbf{R}$ ensures that $\mathbf{A}$ does not know any of the bits in the set $\{y,q,xy,pq\}$ that would violate privacy when consuming the triple in the online phase. The privacy of the value of $pq$ arise from the asymmetry that only $\mathbf{R}$ knows the value of $xy$, while only $\mathbf{A}$ and $\mathbf{B}$ know the value of their shared key $k$.

\medskip\noindent\textbf{Triple distribution from quantum resources.} The classical protocol for distributing private triples that we give above, naturally lends itself to its own protocol in the quantum domain with less assumption. This follows from the notion behind entanglement-based QKD \cite{Ekert-1991-Qc}, namely, that access to an ideal Bell pair $\ket{\Phi^{+}}=(\ket{00}+\ket{11})/\sqrt{2}$ distributed between two parties implies access to a bit of shared symmetric key. Hence, we define a protocol, $\Pi_{\text{QKD}\to\text{TD}}$, that instantiates $\mathcal{F}_{\text{TD}}$, through the use of private channels authenticated by key obtained from QKD. Assuming access to ideal quantum resources, this protocol only requires two Bell pairs shared between $\mathbf{R}$ and both $\mathbf{A}$ and $\mathbf{B}$, in order to privately disseminate the appropriate shares of $\{x,y,xy\}$, and single Bell pair shared between $\mathbf{A}$ and $\mathbf{B}$ to establish the key $k$. 

While $\Pi_{\text{QKD}\to\text{TD}}$ may be of separate practical interested, its suggests the need to generate a triple locally before extracting one that is private. Instead, we focus on proving the security of a protocol that achieves TD in a more a direct fashion. QTD defines the 3PC problem of disseminating private triples from a single quantum resource, without any form of classical communication or privately generated randomness. This is accomplished by finding the equivalent entangled quantum resource, represented as a graph state, $\ket{G}$, and measurement sequence that encodes a triple, in the same way that access to a Bell pair and local measurements yields symmetric shared key. A protocol, $\Pi_{\text{QTD}}$, that consumes a quantum resource must be more carefully designed, however, as no one party is allowed to possess a full view of the triple, unlike the case with shared key. 

In Fig. \ref{fig:QTD-prob}, we compare the network setting between $\Pi_{\text{QKD}\to\text{TD}}$ to our proposed $\Pi_{\text{QTD}}$. While we give an explicit resource state and protocol for $\Pi_{\text{QTD}}$ that requires no communication, this is based on the intuition that the resource state itself can be verified by a separate protocol. Clearly, classical communication is required to a verify such a resource, through the verification protocols discussed in Sec. \ref{sec:INT-related}. While a full numerical analysis of the communications costs to verify the resources for both protocols is left for future work, the resource state we introduce for $\Pi_{\text{QTD}}$ is of a comparable size (in terms of the number of qubits) needed for $\Pi_{\text{QKD}\to\text{TD}}$ and further removes the interaction that comes from waiting for $\mathbf{R}$'s messages in $\Pi_{\text{QKD}\to\text{TD}}$. We believe that our notion of QTD is therefore interesting from a resource theoretic perspective in cryptography, as implies the ability to extract a multiplication triple from a resource without interaction. 

\begin{figure}[]
    \centering
    \includegraphics[width=0.6\textwidth]{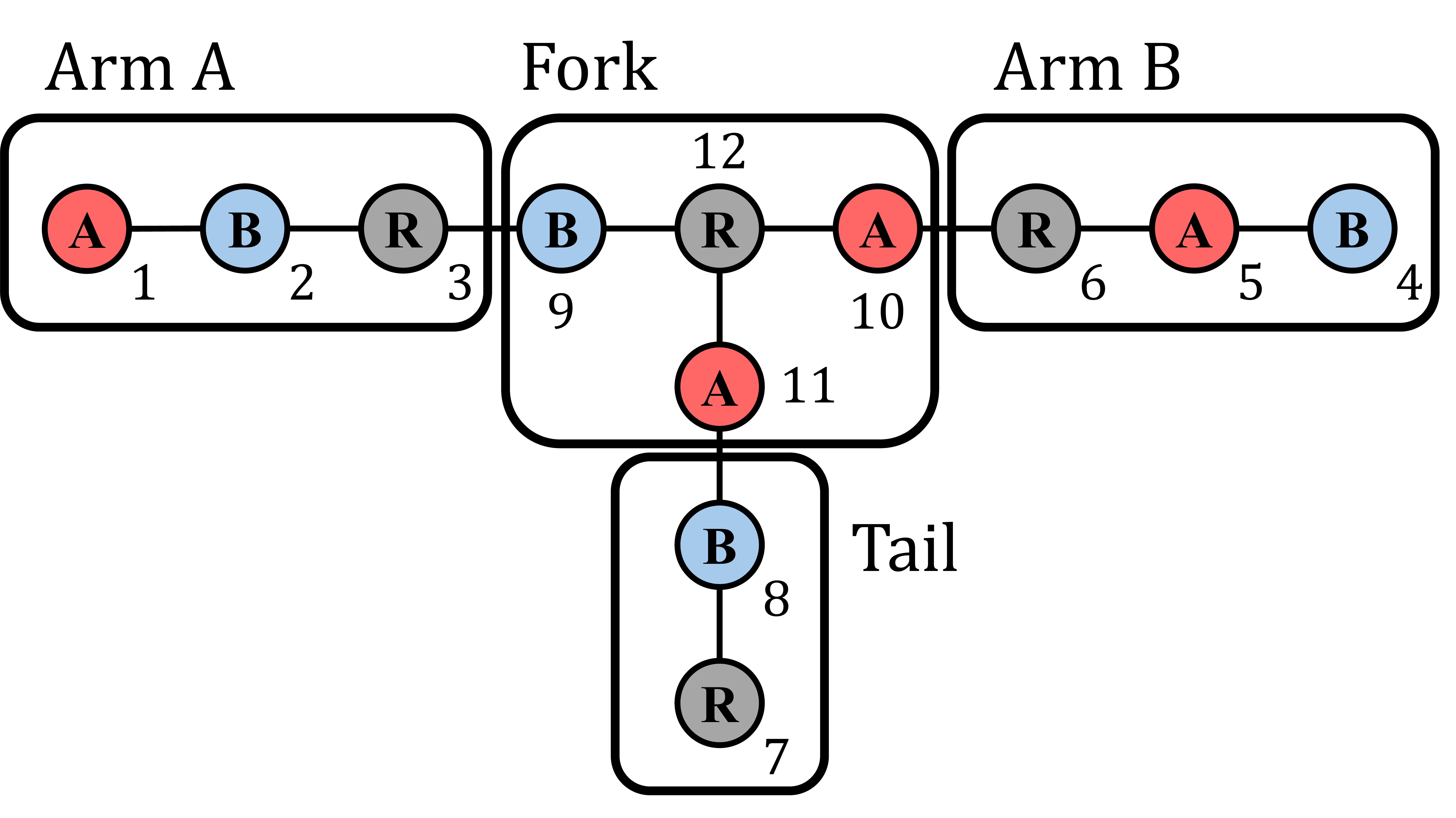}
    \caption{A graph state $\ket{G_{\wedge}}$ from which a binary multiplication triple can be extracted from the measurement-based protocol $\Pi_{\wedge}$. Each vertex represents a qubit, distributed to a party $\mathbf{X}\in\{\mathbf{A},\mathbf{B},\mathbf{R}\}$. The graph is partitioned into regions, defined by their purpose in the protocol. The numerical labeling is arbitrary. Each party is only required to measure their qubits in the Fork after those in the Appendages of the graph.}
    \label{fig:G}
\end{figure}

\section{\label{sec:OVR}A resource state for encoding triples}
We give a protocol, $\Pi_{\text{QTD}}$, for extracting a binary multiplication triple from an ideal quantum resource in the tripartite graph state $\ket{G_{\wedge}}$, depicted in Fig. \ref{fig:G}. The protocol consists of a non-interactive set of local measurements made by each party $\mathbf{X}\in\{\mathbf{A},\mathbf{B},\mathbf{R}\}$ on the resource state, labeled as its own measurement-based protocol, $\Pi_{\wedge}$, detailed in Fig. \ref{fig:Pi-QTD}. We note that the composition $\Pi_{\wedge}\circ\mathcal{F}_{\ket{G_{\wedge}}}=\Pi_{\text{QTD}}$ instantiates $\mathcal{F}_{\text{TD}}$, in the presence of $\mathcal{F}_{\ket{G_{\wedge}}}$, which distributes the ideal resource.

\medskip\noindent\textbf{The resource state}. 
The privacy achieved by the protocol is largely admitted by the structure of $\ket{G_{\wedge}}$, and who holds which qubits. For clarity, it is useful to partition the graph into regions based on their intended purpose. Qubits in $V_{\text{Arm A}}=\{1,2,3\}$, $V_{\text{Arm B}}=\{4,5,6\}$, and $V_{\text{Tail}}=\{7,8\}$ together form the Appendages of the graph. Qubits in $V_{\text{Fork}}=\{9,10,11,12\}$ form the central Fork of $\ket{G_{\wedge}}$. As it will be useful later on, we let $\Omega_{\mathbf{X}}$ be the subset of labels for the qubits $\mathbf{X}$ holds in $\ket{G_{\wedge}}$. The complete set of vertices in the graph is $V=\Omega_{\mathbf{A}}\cup\Omega_{\mathbf{B}}\cup\Omega_{\mathbf{R}}=V_{\text{Arm A}}\cup V_{\text{Arm B}}\cup V_{\text{Tail}}\cup V_{\text{Fork}}$.

\medskip\noindent\textbf{A non-interactive measurement sequence.} There is no interaction in $\Pi_{\text{QTD}}$, in the sense that no communication is necessary. Overall, the only requirement we place on the order in which qubits are measured is that each $\mathbf{X}\in\{\mathbf{A},\mathbf{B},\mathbf{R}\}$ must measure their qubits in the Appendages, before measuring their qubits in the Fork. Note that this only restricts when $\mathbf{X}$ should measure their qubit in $\Omega_{\mathbf{X}}$ with respect to each other and not the qubits of another party. This is particularly advantageous for both security and implementation. 

\begin{figure}[]
    \centering
    \framebox{
        \begin{minipage}[t]{0.9\textwidth} 
            \centerline{Protocol $\Pi_{\wedge}$}
            \medskip
            A copy of the resource $\ket{G_{\wedge}}$ is consumed by $\{\mathbf{A},\mathbf{B},\mathbf{R}\}$, who preform local quantum measurements on their qubits according to the following sequence.
            \begin{itemize}
                \item[(1)] $\mathbf{A}$, $\mathbf{B}$, and $\mathbf{R}$ measure $Z_{1}$, $X_{2}$, and $Z_{3}$, respectively, on their qubits in Arm A, obtaining measurement outcomes $m_1$, $m_2$, and $m_3$. $\mathbf{A}$ sets $p\leftarrow m_1$. 
                \item[(2)] $\mathbf{B}$, $\mathbf{A}$, and $\mathbf{R}$ measure $Z_{4}$, $X_{5}$, and $Z_{6}$, respectively, on their qubits in Arm B, obtaining measurement outcomes $m_4$, $m_5$, and $m_6$. $\mathbf{A}$ sets $q\leftarrow m_4$.
                \item[(3)] $\mathbf{R}$ and $\mathbf{B}$ measure $X_{7}$ and $Z_{8}$, respectively, on their qubits in the Tail, obtaining $m_7$ and $m_8$. $\mathbf{R}$ sets $s\leftarrow m_7$, and $\mathbf{B}$ sets $s\leftarrow m_8$.
                \item[(4)] $\mathbf{B}$ and $\mathbf{A}$ apply the unitary operations $(Z_{9})^{m_2}$ and $(Z_{11})^{m_5}$, respectively, to their qubits in the Fork.
                \item[(5)] $\mathbf{A}$, $\mathbf{B}$, and $\mathbf{R}$ make the following conditional measurements on their qubits in the Fork:
                \begin{itemize}
                    \item[(5a)] $\mathbf{B}$ measures $(W_{9})^{s}Z_{9}(W_{9}^{\dagger})^{s}$, where $W:=(iX)^{1/2}$, and obtains $m_{9}$. Note that $Y=WZW^{\dagger}$.
                    \item[(5b)] $\mathbf{A}$ measures $(W_{10})^{p}Z_{10}(W_{10}^{\dagger})^{p}$ and $(W_{11}X_{11})^{p}Z_{11}(X_{11}W_{11}^{\dagger})^{p}$, and obtains $m_{10}$ and $m_{11}$.
                    \item[(5c)] $\mathbf{R}$ measures $(U_{12})^{s}X_{12}(U_{12}^{\dagger})^{s}$, where $U:=(-iZ)^{1/2}$, and obtains $m_{12}$. Note that $Y=UXU^{\dagger}$.
                \end{itemize}
                \item[(6)] $\mathbf{A}$, $\mathbf{B}$, and $\mathbf{R}$ make following the assignments:
                \begin{align*}
                    [pq]_{\mathbf{A}}&\leftarrow m_{10}\oplus m_{11}, \\
                    [pq]_{\mathbf{B}}&\leftarrow m_{9},\\
                    [pq]_{\mathbf{R}}&\leftarrow m_{12}.
                \end{align*}
            \end{itemize}
        \end{minipage}
    }
    \caption{A measurement protocol, $\Pi_{\wedge}$, for extracting a binary multiplication triples from the resource state $\ket{G_{\wedge}}$. he combined final choice of measurement bases is private from all parties.}
    \label{fig:Pi-QTD}
\end{figure}

\medskip\noindent\textbf{Correctness}. Correctness follows from utilizing Lemma \ref{lem:G-corr} and tracing the evolution of the stabilizer of $\ket{G_{\wedge}}$ through these measurements. In steps (1)-(3) of the honest execution of $\Pi_{\text{QTD}}$, the parties perform fixed stabilizer tests on the Appendages in order to produce a set of correlations, which satisfy
\begin{subequations}
    \begin{align}
        p&\leftarrow m_1 = m_2\oplus m_3, \label{eq:p}\\ 
        q&\leftarrow m_4 = m_5\oplus m_6, \label{eq:q}\\ 
        s&\leftarrow m_7 = m_8.  \label{eq:s}
    \end{align}
\end{subequations}
This generates a uniformly random and private bit $p$ for $\mathbf{A}$ and $q$ for $\mathbf{B}$. While it may seem unnecessary that shares of $[p]$ and $[q]$ are generated between $\mathbf{R}$, and $\mathbf{B}$ or $\mathbf{A}$, accordingly, doing so allows $\mathbf{A}$ and $\mathbf{B}$ to apply the appropriate local unitaries in step (4) that encode these bits in the phase information of qubits in the central Fork, without either player learning the other's value. In contrast, the uniformly random bit $s$ is shared between $\mathbf{B}$ and $\mathbf{R}$, but private from $\mathbf{A}$ and prevents them from cheating with their additional qubit in the Fork. 

Next, in order to generate the multiplicative correlation of the triple in a way that is private, the honest execution requires that the choice of stabilizer measurement made on the Fork be conditioned on private information obtained by the parties in prior steps. Let $\Lambda(p,s)\subseteq  V_{\text{Fork}}$ be a subset that is conditioned on $p$ and $s$. In step (5), the parties collectively measure the individual components of an operator $K_{\Lambda(p,s)}$, such $(-1)^{\sum_{i\in\Lambda(p,s)}r_{i}}K_{\Lambda(p,s)}$ is an element of the stabilizer of $\ket{G_{\wedge}}$, acting only on qubits in the Fork. Table \ref{tab:C-PiQTD}, details each possible measurement combination and the corresponding correlation obtained from it. In total the correlation generated is such that
\begin{equation}
    m_{9}\oplus m_{10}\oplus m_{11}\oplus m_{12}= sr_{9}\oplus p(r_{10}\oplus r_{11})\oplus r_{12} = pq \label{eq:C-Fork}
\end{equation}
given the encoding of the phase information $\{r_{9},r_{10},r_{11},r_{12}\}=\{p,q,s,0\}$, which arises from steps (1)-(4).

\medskip\noindent\textbf{Honest-but-curious privacy.} Before proving malicious security, we demonstrate why security against an honest-but-curious adversary is trivial. Fig. \ref{fig:RealIdeal} demonstrates an abstract picture of the ideal functionality $\mathcal{F}_{\text{TD}}$ and the real protocol $\Pi_{\text{QTD}}$. In both, we give the corresponding views generated for each party, assuming honesty. In the honest execution of $\Pi_{\text{QTD}}$, we denote the view generated for $\mathbf{X}$ as $\text{view}_{\mathbf{X}}^{\Pi}:=\{m_{i}\}_{i\in\Omega_{\mathbf{X}}}$. Given that the protocol consists of only local Pauli measurements made on the ideal resource $\ket{G_{\wedge}}$, when all three parties are honest, its evident that combined views of the honest parties, $\text{view}^{\Pi}=\text{view}_{\mathbf{A}}^{\Pi}\cup \text{view}_{\mathbf{B}}^{\Pi}\cup \text{view}_{\mathbf{R}}^{\Pi}$, form three independent sets of independent and uniformly random bits. Consider $\mathbf{A}$, whose view at the end of the honest execution of $\Pi_{\text{QTD}}$ is 
\begin{equation}
    \text{view}_{\mathbf{A}}^{\Pi}=\{p,m_{5},m_{10},m_{11}\}. \label{eq:V-AH}
\end{equation}
To successfully cheat, $\mathbf{A}$ must learn $q$, however by Eq. \ref{eq:q}, $\mathbf{R}$ holds the private the measurement outcome which would reveal this information to $\mathbf{A}$. Hence, so long as everyone is honest, $P(q=0|\text{view}_{\mathbf{A}}^{\Pi})=\frac{1}{2}$. A similar argument follows for $\mathbf{B}$ and $\mathbf{R}$, and the private bits they each wish to learn. 

\medskip\noindent\textbf{The intuition behind malicious privacy.} We can go a bit further than the above, and demonstrate how an honest pair and ideal resource in $\Pi_{\text{QTD}}$ constrains the set of possible correlations a malicious adversary can obtain from $\ket{G_{\wedge}}$. Suppose that any pair parties in $\{\mathbf{A},\mathbf{B},\mathbf{R}\}$ follow the steps $\Pi_{\text{QTD}}$ honestly. Let $\mathbf{X}^{*}$ denote the remaining party, corrupted by a malicious adversary. One can show that the state held by $\mathbf{X}^{*}$, given the outcomes of the honest pair, is a stabilizer state described the stabilizer $S^{\mathbf{X}^{*}}$, such that
\begin{subequations} \label{eq:S-CX}
    \begin{align}
    S^{\mathbf{A}^{*}} &= \langle (-1)^{m_{2}\oplus m_{3}} Z_1, (-1)^{q\oplus m_{6}}X_{5}, (-1)^{m_{6}\oplus s}X_{10}X_{11}, (-1)^{(m_{2}\oplus m_{3})s\oplus m_{9}\oplus m_{12}}Z_{10}Z_{11} \rangle, \label{eq:S-CA} \\
    S^{\mathbf{B}^{*}} &= \langle (-1)^{p\oplus m_{3}}X_{2}, (-1)^{m_{5}\oplus m_{6}}Z_{4}, (-1)^{s}Z_{8},(-1)^{p(m_{5}\oplus m_{6}\oplus s)\oplus m_{3}s\oplus m_{10}\oplus m_{11}\oplus m_{12}} (W_{9})^{s}Z_{9}(W_{9}^{\dagger})^{s}\rangle, \label{eq:S-CB} \\   
    S^{\mathbf{R}^{*}} &= \langle (-1)^{p\oplus m_{2}}Z_{3}, (-1)^{q\oplus m_5}Z_{6}, (-1)^{s}X_{7}, (-1)^{pq \oplus m_{9}\oplus m_{10}\oplus m_{11}}(U_{12})^{s}X_{12}(U_{12}^{\dagger})^{s}\rangle. \label{eq:S-CR}
    \end{align}
\end{subequations}
If $\mathbf{X}^{*}\in\{\mathbf{B}^{*},\mathbf{R}^{*}\}$, the state defined by $S^{\mathbf{X}^{*}}$ is uniquely determined by $\text{view}_{\mathbf{X}}^{\Pi}$, their view in the honest execution. This is because every element of their stabilizer acts non-trivially on only a single qubit in either case, and so the actions an honest $\mathbf{X}$ would make in the protocol in fact stabilizes their system, up to the phase information they learn from $\text{view}_{\mathbf{X}}^{\Pi}$. $\mathbf{A}$, on the other hand, does not learn the complete set of bits that parametrizes $S^{\mathbf{A}^*}$, from their honest execution of the protocol. One way to see this is that while $\text{view}_{\mathbf{A}}^{\Pi}$ contains four bits, it's the sum of bits $m_{10}\oplus m_{11}$ that is actually correlated with $pq$. In order to fully determine their stabilizer, a corrupt $\mathbf{A}^*$ simply can choose to follow $\Pi_{\text{QTD}}$ through step (4), and then jointly measure $X_{10}X_{11}$ in order obtain to a bit $q\oplus s$, without disturbing the rest of the honest execution. Therefore, we combine $m_{10}\oplus m_{11}$ into a single bit and append the bit $q\oplus s$ to $\text{view}_{\mathbf{A}}^{\Pi}$ in Eq. \ref{eq:V-AH}, such that 
\begin{equation}
    \text{view}_{\mathbf{A}}^{\Pi}=\{p,m_{5},m_{10}\oplus m_{11},q\oplus s\}, \label{eq:V-AH-mod}
\end{equation}
which captures the information a corrupt $\mathbf{A}^*$ needs to fully characterize their state, up to some local transformations of the bits. Note that $P(q=0|q\oplus s)=1/2$, and so honest-but-curious privacy remains even if this bit were obtained by $\mathbf{A}$ in the honest execution of $\Pi_{\text{QTD}}$.

\begin{table}[]
    \centering
    \setlength{\tabcolsep}{10pt} 
    \renewcommand{\arraystretch}{1.5} 
    \begin{tabular}{ccc}
        \toprule
        \multicolumn{1}{l}{$\Lambda(p,s)$} & Local measurement bases & Correlation\\
        \hline
        \multicolumn{1}{l}{$\Lambda(0,0)=\{12\}$} & $Z_{9}^{\mathbf{B}},Z_{10}^{\mathbf{A}},Z_{11}^{\mathbf{A}},X_{12}^{\mathbf{R}}$ & \multicolumn{1}{l}{$m_{9}\oplus m_{10}\oplus m_{11}\oplus m_{12} = 0$}\\
        \multicolumn{1}{l}{$\Lambda(0,1)=\{9,12\}$} & $Y_{9}^{\mathbf{B}},Z_{10}^{\mathbf{A}},Z_{11}^{\mathbf{A}},Y_{12}^{\mathbf{R}}$ & \multicolumn{1}{l}{$m_{9}\oplus m_{10}\oplus m_{11}\oplus m_{12} = 0$} \\
        \multicolumn{1}{l}{$\Lambda(1,0)=\{10,11,12\}$} & $Z_{9}^{\mathbf{B}},Y_{10}^{\mathbf{A}},-Y_{11}^{\mathbf{A}},X_{12}^{\mathbf{R}}$ & \multicolumn{1}{l}{$m_{9}\oplus m_{10}\oplus m_{11}\oplus m_{12} = q$}\\
        \multicolumn{1}{l}{$\Lambda(1,1)=\{9,10,11,12\}$} & $Y_{9}^{\mathbf{B}},Y_{10}^{\mathbf{A}},-Y_{11}^{\mathbf{A}},Y_{12}^{\mathbf{R}}$ & \multicolumn{1}{l}{$m_{9}\oplus m_{10}\oplus m_{11}\oplus m_{12} = q$} \\
    \bottomrule
    \end{tabular}
    \caption{Table of correlations obtained at the end of $\Pi_{\wedge}$, from the conditional stabilizer test made on the Fork of $\ket{G_{\wedge}}$.}
    \label{tab:C-PiQTD}
\end{table}

The intuition behind security is that the honest pair ensures that $\mathbf{X}^*$ only has access to a pure state, $\ket{\text{view}^{\Pi}_{\mathbf{X}}}$, that is uniquely characterized by the honest view of $\mathbf{X}$, in which every sensitive bit value the adversary might want to learn is padded by some uniformly random bit. While in reality this has more to do with the nature of one-time pads than anything physical, we can state this fact precisely as follows. The stabilizers expressed in Eq. \ref{eq:S-CX} can be written in a general form, as $S^{\mathbf{X}^{*}}=\langle\{(-1)^{c_i}g_{i}\}_{i\in\Omega_{\mathbf{X}}}\rangle$, where each $g_i$ is one of the four appropriate generators, such that $\text{view}_{\text{X}}^{\Pi}=\{c_{i}\}_{i\in\Omega_{\mathbf{X}}}$. Note, for the case of $\mathbf{A}^{*}$, we are making an association between $g_{10}=X_{10}X_{11}$ and $g_{11}=Z_{10}Z_{11}$ to keep notation consistent, and their $\text{view}_{\mathbf{A}}^{\Pi}$ comes from a set of transformations made on $\{c_{i}\}_{i\in\Omega_{\mathbf{X}}}$, rather than direct equality. From this, the corrupted party can describe their local quantum system as
\begin{equation}
    \ketbra{\text{view}_{\mathbf{X}}^{\Pi}}:=\frac{1}{2^{4}}\prod_{i\in\Omega_{\mathbf{X}}}\left(\mathbb{I}+(-1)^{c_{i}}g_{i}\right), \label{eq:psi-X}
\end{equation}
What we care about is the ability of $\mathbf{X}^{*}$ to guess the values of a certain subset of sensitive bits, $\Xi_{\mathbf{X}}\subset\text{view}^{\Pi}\backslash\text{view}_{\mathbf{X}}^{\Pi}$, that would reveal the value of $pq$. Explicitly, $\Xi_{\mathbf{A}}=\{q\}$, $\Xi_{\mathbf{B}}=\{p\}$, and $\Xi_{\mathbf{R}}=\{p,q\}$. For clarity, let $\Upsilon_{\mathbf{X}}=\text{view}^{\Pi}\backslash\text{view}_{\mathbf{X}}^{\Pi}\cup\Xi_{\mathbf{X}}$ be the set of bits which on their own provide no information about the triple. Prior to any attack that $\mathbf{X}^{*}$ might employ, they can express their state of knowledge of $\Xi_{\mathbf{X}}$ as a classical-quantum (cq) state
\begin{equation}
    \rho_{\text{cq}}^{\mathbf{X}^{*}}(\Xi_{\mathbf{X}},\text{view}_{\mathbf{X}}^{\Pi})=\frac{1}{2^{|\Upsilon_{\mathbf{X}}|}}\sum_{\Upsilon_{\mathbf{X}}}\ketbra{\Xi_{\mathbf{X}},\Upsilon_{\mathbf{X}}}^{\mathsf{X}}\otimes\ketbra{\text{view}_{\mathbf{X}}^{\Pi}}, \label{eq:rhocq-X}
\end{equation}
where the superscript $\mathsf{X}$ denotes the set of classical registers, and the sum is over all possible measurement outcomes in $\Upsilon_{\mathbf{X}}$. It is a straightforward calculation from Eq. \ref{eq:rhocq-X} that $\Tr_{\mathsf{X}}\left[\rho_{\text{cq}}^{\mathbf{A}^*}(p;\text{view}_{\mathbf{A}}^{\Pi})\right]=\Tr_{\mathsf{X}}\left[\rho_{\text{cq}}^{\mathbf{B}^{*}}(q;\text{view}_{\mathbf{B}}^{\Pi})\right]=\Tr_{\mathsf{X}}\left[\rho_{\text{cq}}^{\mathbf{R}^*}(p,q;\text{view}_{\mathbf{R}}^{\Pi})\right]=\mathbb{I}/2^{4}$. In more detail, $\mathbf{X}^{*}$'s description of their physical system, marginalized over the bits in $\Pi_{\text{QTD}}$ that provide no information outside of the protocol, is maximally mixed. The view then generated from $\ket{\text{view}^{\Pi}_{\mathbf{X}}}$ is therefore uncorrelated from $\Xi_{\mathbf{X}}$, in the sense 
\begin{equation}
    P\left(x=0\middle|\rho_{\text{cq}}^{\mathbf{X}^{*}}(\Xi_{\mathbf{X}},\text{view}_{\mathbf{X}}^{\Pi})\right)=1/2,
\end{equation}
for each $x\in\Xi_{\mathbf{X}}$. In the next section, we formalize our notion of security from this fact.

\begin{figure}
    \centering
    \includegraphics[width=0.99\textwidth]{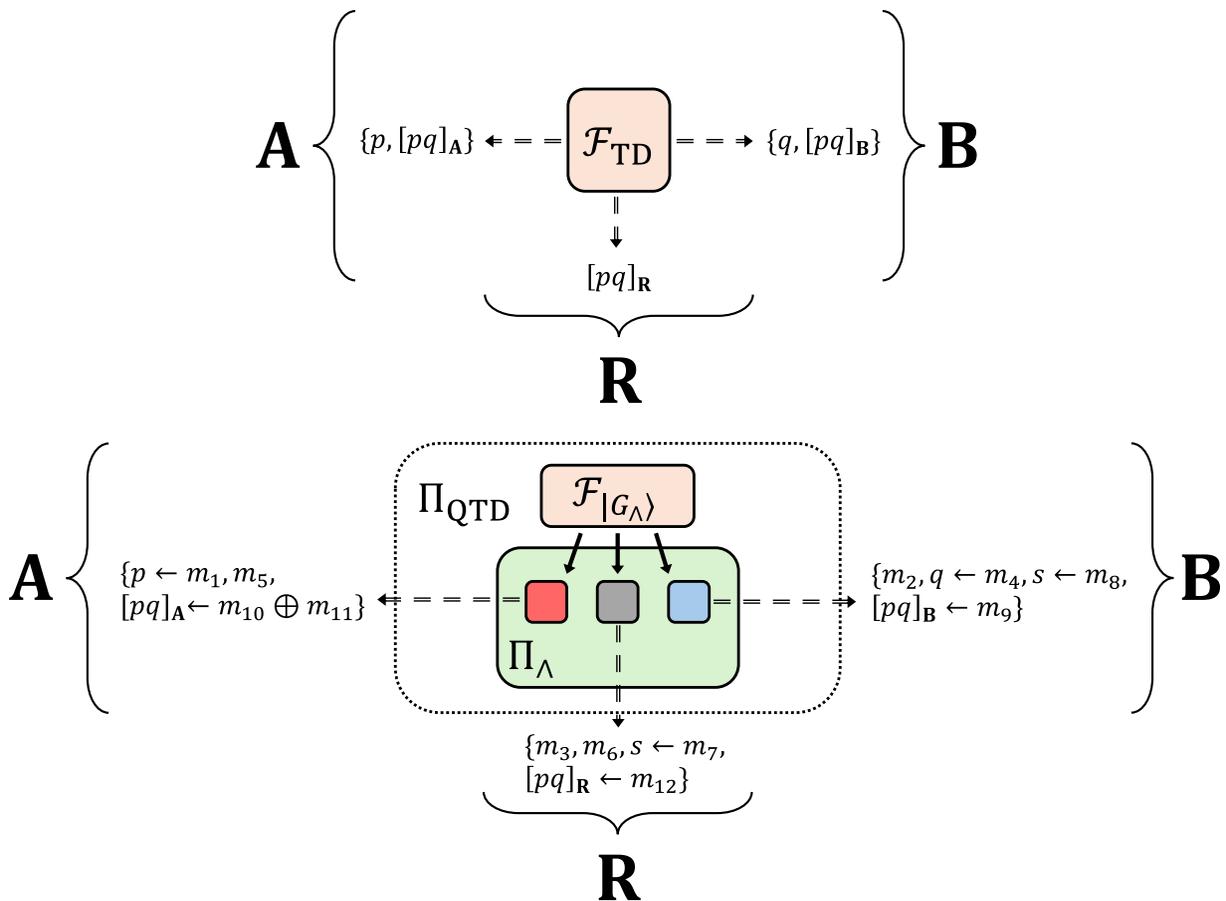}
    \caption{A high-level picture of an ideal functionality $\mathcal{F}_{\text{TD}}$ (top) for secret sharing a binary multiplication triple among three parties, and the real protocol, $\Pi_{\text{QTD}}$ (bottom), which securely realizes it. Here we depict the view generated for each party by the honest execution of the protocol. Solid single-lined arrows indicate quantum information distributed to each party, while dashed double-lined arrows indicate private classical information.}
    \label{fig:RealIdeal}
\end{figure}

\section{\label{sec:SEC}Malicious-secure privacy}
We prove perfect privacy for the shares obtained from $\Pi_{\text{QTD}}$, against a malicious adversary who corrupts only a single $\mathbf{X}^{*}$. As there is no interaction in $\Pi_{\text{QTD}}$, and we assume here that a verified copy of $\ket{G_{\wedge}}$ has been distributed, the only classical information $\mathbf{X}^{*}$ can extract from the protocol is from their local quantum system. Given an honest pair, the state $\ket{\text{view}^{\Pi}_{\mathbf{X}}}$, which the corrupted $\mathbf{X}^{*}$ can act on in $\Pi_{\text{QTD}}$, is a unique pure state determined by a set of independent and uniformly random bits, $\text{view}^{\Pi}_{\mathbf{X}}$, that $\mathbf{X}^{*}$ would otherwise learn in the honest execution. An attack then executed by $\mathbf{X}^{*}$ could be described by some arbitrary local action on their quantum system, however, in the previous section, we demonstrated that the only classical information that can be extracted from their system is padded, leaking no information. Therefore, we can formalize this guarantee of privacy through the construction of a simple simulator, $\mathbb{S}$, that interacts with $\mathcal{F}_{\text{TD}}$ and reconstructs a $\rho_{\text{cq}}^{\mathbf{X}^{*}}(\Xi_{\mathbf{X}})$ for the adversary that yields a perfectly indistinguishable view from the one they would a obtain when attacking the real protocol. For each possible role in $\Pi_{\textup{QTD}}$, Fig. \ref{fig:Sim} depicts a simulator that replicates the view generated for the adversary in the real protocol.

\begin{theorem} \label{thm:Sec}
The protocol $\Pi_{\textup{QTD}}$ realizes $\mathcal{F}_{\textup{TD}}$ with perfect privacy, in the presence of a functionality $\mathcal{F}_{\ket{G_{\wedge}}}$ that securely distributes $\ket{G_{\wedge}}$, given a malicious adversary that corrupts only a single $\mathbf{X}\in\{\mathbf{A},\mathbf{B},\mathbf{R}\}$, labeled $\mathbf{X}^{*}$. Explicitly, given an honest pair in $\Pi_{\textup{QTD}}$, there exists a simulator $\mathbb{S}$ interacting with $\mathcal{F}_{\textup{TD}}$, such that the cq state held by $\mathbf{X}^{*}$, following their attack on the real or ideal protocol, satisfies
\begin{equation}
    P\left(x=0\middle|\rho_{\textup{cq}}^{\mathbf{X}^{*}}(\Xi_{\mathbf{X}},\textup{view}_{\mathbf{X}}^{\Pi})\right)-P\left(x=0\middle|\rho_{\textup{cq}}^{\mathbf{X}^{*}}(\Xi_{\mathbf{X}},\textup{view}^{\mathcal{F}}_{\mathbf{X}},\mathbb{S})\right)=0, \label{eq:S-Ind}
\end{equation}
for each $x\in\Xi_{\mathbf{R}}$, and every possible $\textup{view}_{\mathbf{X}}^{\Pi}$ generated for the adversary in the real protocol.

\end{theorem}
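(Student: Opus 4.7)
The plan is to construct, for each possible corruption $\mathbf{X}^{*}\in\{\mathbf{A}^{*},\mathbf{B}^{*},\mathbf{R}^{*}\}$, an explicit simulator $\mathbb{S}$ that, after receiving from $\mathcal{F}_{\textup{TD}}$ the share $[pq]_{\mathbf{X}}$ (and, when $\mathbf{X}=\mathbf{A}$ or $\mathbf{X}=\mathbf{B}$, the additional bit $p$ or $q$), hands $\mathbf{X}^{*}$ a local quantum system whose state is statistically identical to the one remaining in $\mathbf{X}^{*}$'s hands at the end of the honest pair's measurements in the real execution of $\Pi_{\textup{QTD}}$. Since $\mathcal{F}_{\ket{G_{\wedge}}}$ takes care of delivering the authentic resource and $\Pi_{\textup{QTD}}$ is entirely non-interactive, the only information $\mathbf{X}^{*}$ can extract is through an arbitrary local quantum operation on this marginal, so matching marginals suffices for matching cq states.

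The simulator for each case is read off directly from the stabilizers in Eq. \ref{eq:S-CX}. Concretely, $\mathbb{S}$ uses the bits delivered by $\mathcal{F}_{\textup{TD}}$, samples uniformly independent fresh bits in place of every honest-pair measurement outcome that appears as a sign in $S^{\mathbf{X}^{*}}$, and locally prepares the stabilizer state with the resulting phases. For $\mathbf{B}^{*}$ and $\mathbf{R}^{*}$ each generator in Eqs. \ref{eq:S-CB} and \ref{eq:S-CR} acts on a single qubit, so the prepared state is a product of single-qubit eigenstates. For $\mathbf{A}^{*}$, two generators in Eq. \ref{eq:S-CA} are two-qubit operators on qubits $10$ and $11$, so $\mathbb{S}$ prepares an entangled two-qubit stabilizer state; in every case the state is uniquely determined by its four signs and the output of $\mathbb{S}$ takes the form of Eq. \ref{eq:rhocq-X}.

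To finish, I would argue that in the real protocol the honest-pair outcomes $\{c_i\}_{i\in\Omega_{\mathbf{X}}}$ parametrizing $S^{\mathbf{X}^{*}}$ are mutually independent and uniformly distributed bits, tied to $\{p,q,pq\}$ only through the honest relations in Eqs. \ref{eq:p}--\ref{eq:s} and Eq. \ref{eq:C-Fork}. Since these same relations govern the bits the ideal functionality delivers to $\mathbb{S}$, the joint distribution on which $\rho^{\mathbf{X}^{*}}_{\textup{cq}}(\Xi_{\mathbf{X}},\textup{view}^{\mathcal{F}}_{\mathbf{X}},\mathbb{S})$ is built matches the one on which $\rho^{\mathbf{X}^{*}}_{\textup{cq}}(\Xi_{\mathbf{X}},\textup{view}^{\Pi}_{\mathbf{X}})$ is built. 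Equality in Eq. \ref{eq:S-Ind} then follows term-by-term from the earlier calculation that both cq states satisfy $\Tr_{\mathsf{X}}\rho^{\mathbf{X}^{*}}_{\textup{cq}}=\mathbb{I}/2^{4}$, whence $P(x=0\,|\,\rho^{\mathbf{X}^{*}}_{\textup{cq}})=1/2$ for every $x\in\Xi_{\mathbf{X}}$ on both sides.

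The main obstacle is not designing $\mathbb{S}$ but verifying the post-measurement stabilizers in Eq. \ref{eq:S-CX} in the first place. This requires propagating the stabilizer of $\ket{G_{\wedge}}$ through the (possibly conditional) honest measurements of steps (1)--(5) of $\Pi_{\wedge}$ and reading off exactly which Pauli operators with which signs remain as generators once the bits in $\textup{view}^{\Pi}_{\mathbf{X}}$ have been fixed. I would carry this out by a systematic stabilizer-flow calculation, being especially careful in the Fork, where the honest $\mathbf{B}$ and $\mathbf{R}$ measure in bases conjugated by $W^{s}$ and $U^{s}$: the corresponding generator in each $S^{\mathbf{X}^{*}}$ must appear in the same conjugated form, so that the state prepared by $\mathbb{S}$ coincides with the genuine post-measurement state and no attack by $\mathbf{X}^{*}$ can detect a discrepancy between the real and ideal worlds.
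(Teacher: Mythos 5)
Your proposal is correct and takes essentially the same route as the paper's own proof: a simulator that queries $\mathcal{F}_{\text{TD}}$, pads its output with fresh uniform bits consistent with the honest correlations of Eqs.~\ref{eq:p}--\ref{eq:s} and \ref{eq:C-Fork}, prepares the post-measurement stabilizer state of Eq.~\ref{eq:S-CX} (a two-qubit entangled state on qubits $10,11$ for $\mathbf{A}^{*}$, single-qubit eigenstates for $\mathbf{B}^{*},\mathbf{R}^{*}$), and concludes perfect indistinguishability because the sensitive bits in $\Xi_{\mathbf{X}}$ remain one-time padded, so both cq states give guessing probability $1/2$. The paper's proof just makes the sign assignments explicit (e.g., $c_{11}\leftarrow[pq]_{\mathbf{A}}\oplus p\,c_{10}$ in the $\mathbf{A}^{*}$ case), which is the precise form of the joint-distribution matching you describe, and, like you, it takes the stabilizers in Eq.~\ref{eq:S-CX} as previously established rather than re-deriving them inside the proof.
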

\begin{proof}
$\mathbb{S}$ first calls $\mathcal{F}_{\text{TD}}$, in order to obtain $\text{view}_{\mathbf{X}}^{\mathcal{F}}$. It then generates an appropriate number of independent and uniformly random bits that it appends to $\text{view}_{\mathbf{X}}^{\mathcal{F}}$, obtaining $\text{view}_{\mathbf{X}}^{\Pi}$ in the process.  Next, it builds the state $\ket{\text{view}_{\mathbf{X}}^{\Pi}}$, according to stabilizers given in Eq. \ref{eq:S-CX}, and sends it $\mathbf{X}^{*}$. We can work through an explicit example, in the case when $\mathbf{X}^*=\mathbf{A}^{*}$. $\mathbb{S}$ calls $\mathcal{F}_{\text{TD}}$ and receives uniformly random bits $p,[pq]_{\mathbf{A}}$. In order to generate a state that is consistent with the output of the real protocol, the simulator must construct a set of bits $\{c_{1},c_{5},c_{10},c_{11}\}$ that captures $\text{view}_{\mathbf{A}}^{\Pi}$. It stars by generating a pair of independent and uniformly random bits, say $x$ and $y$, and proceeds with the assignments
\begin{subequations}
    \begin{align}
        c_{1} &\leftarrow p, \\
        c_{5} &\leftarrow x, \\
        c_{10} &\leftarrow y, \\
        c_{11} &\leftarrow [pq]_{\mathbf{A}}\oplus py.
    \end{align}
\end{subequations}
$\mathbb{S}$ can then constructs the corresponding $\ket{\text{view}_{\mathbf{A}}^{\Pi}}$, according to stabilizer in Eq. \ref{eq:S-CA}, which it then outputs to the adversary. The example is even simpler for the cases, when $\mathbf{X}^*\in\{\mathbf{B^{*},\mathbf{R}^*}\}$, as $\mathbb{S}$ does not need to perform any local bit operations between the output it receives from $\mathcal{F}_{\text{TD}}$ and the random bits it generates.

In total, $\mathbb{S}$ generates a cq state $\rho_{cq}^{\mathbf{X}^*}(\Xi_{\mathbf{X}};\text{view}_{\mathbf{X}}^{\mathcal{F}},\mathbb{S})$ for adversary, describing their state of knowledge for any bits in $\Xi_{\mathbf{X}}$. It's trivial this state provides a perfectly indistinguishable state of knowledge of $\Xi_{\mathbf{X}}$ from the one obtained by adversary in their attack on $\Pi_{\text{QTD}}$, as here it is explicitly independent from those bits. To see that $\rho_{cq}^{\mathbf{X}^*}(\Xi_{\mathbf{X}};\text{view}_{\mathbf{X}}^{\mathcal{F}},\mathbb{S})$ is consistent with the output the adversary receives from this attack, it is straightforward to show how $\mathbf{X}^*$ can obtain $\text{view}_{\mathbf{X}}^{\Pi}$, without disturbing their state, by making appropriate stabilizer measurements.
\end{proof}

\begin{figure}
    \centering
    \includegraphics[width=0.9\textwidth]{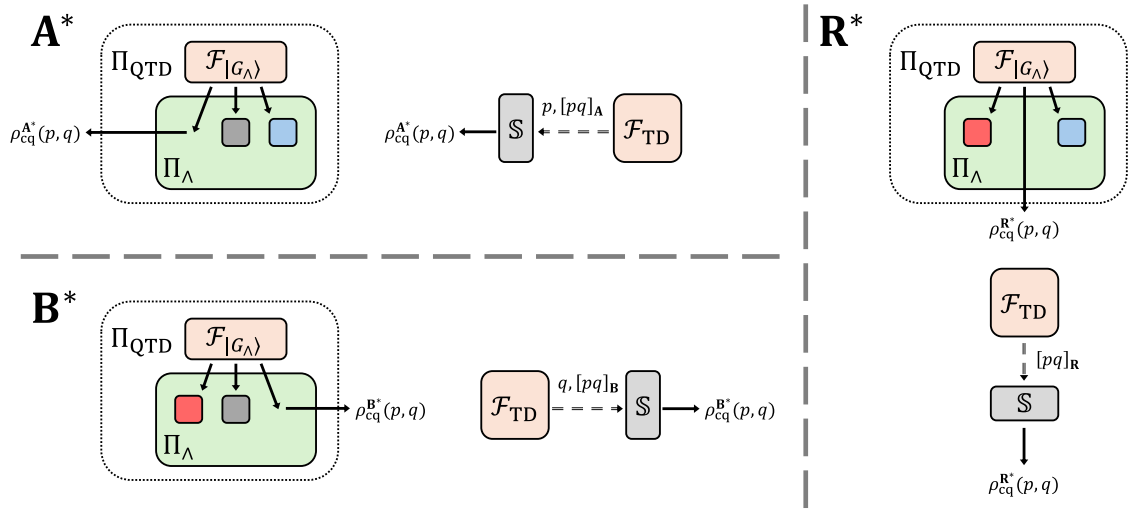}
    \caption{Diagrammatic representations of malicious attacks executed on $\Pi_{\text{QTD}}$, by a singular corrupt $\mathbf{X}^{*}\in\{\mathbf{A}^{*},\mathbf{B}^{*}, \mathbf{R}^{*}\}$. In each case, we demonstrate that the adversary receives from the real execution a cq state of the bits in the triple that is equivalent to one obtained when interacting with the corresponding simulation $\mathbb{S}$, which emulates the attack in the ideal execution.}
    \label{fig:Sim}
\end{figure}

Indeed the construction of the simulator is straightforward given our setup assumptions. Nothing is input into the protocol, so there is no need for $\mathbb{S}$ to extract any information from the adversary and send to the ideal functionality. Furthermore, the protocol is non-interactive to the extent that no messages are sent whatsoever. A result that directly follows from such a simulation-based proof, is that perfect privacy holds through the sequential composition of $\Pi_{\text{QTD}}$ \cite{Beaver-1991-Sm}. However, as $\Pi_{\text{QTD}}$ is non-interactive and input-independent, this trivially extends to parallel and concurrent composition as well under the same definition. Hence, we can construct a completely private offline phase for some MPC in the pre-processing model (i.e. one that utilizes proven information theoretic primitives in the online phase), which only relies on $\Pi_{\text{QTD}}$ in order to distribute the required set of computational resources.


\section{\label{sec:APP}Applications}
The ability to extract triples from an entangled quantum resource with information-theoretic privacy motivates other protocols built on this assumption that can be executed with the same guarantees. The main example of this is secret-sharing-based MPC, within the pre-processing model. In this setting, computation is divided into two parts: 1.) an offline phase, where triples are distributed, and 2.) an online phase in which parties utilize a broadcast channel to open padded messages and perform local computation on their shares. Let $f$ be some function for which a set of parties provide inputs. The goal is to compute $f$, such that none of parties learn anything about one another's input, other than what is revealed by $f$ itself. We will separate the final reconstruction step of the online phase, where parties make the last set of broadcasts revealing their shares of $[f]_{\mathbf{X}}$, for the sake of isolating when an adversary should be able to learn anything about the inputs of $f$. For simplicity, we will assume an ideal public broadcast channel that is only subject to a passive eavesdropper, in the sense that anyone can listen but an adversary is unable to modify messages sent.

From Theorem. \ref{thm:Sec}, we can build a secure offline phase from only calls to $\Pi_{\text{QTD}}$, disseminating some number of triples. In \cite{Gold-2025-Hp}, we utilized a variant of this protocol as a primitive to build up an MPC for evaluating Boolean functions of a limited class. A Boolean $f$ can be expressed in its algebraic normal form (ANF), where we write $f=\bigoplus_{i}\mathfrak{c}_{i}$, with each $\mathfrak{c}_{i}$ being some product of party inputs. The MPC from \cite{Gold-2025-Hp} admitted any $f$ of this form, where each $\mathfrak{c}_{i}$ is either a quadratic conjunction of distinct parties' inputs, $x_{i}y_{i}$, or a linear bit $z_{i}$ belonging to a single party. This supports any two-party function (one taking inputs from two unique parties) in our 3PC setting, in the sense that $\mathbf{R}$ can assist $\mathbf{A}$ and $\mathbf{B}$ in computing any Boolean function of $\mathbf{A}$'s and $\mathbf{B}$'s inputs. As a follow-up, we show here how to build 1-out-of-2 OT in this setting, where $\mathbf{A}$ plays the role of the Sender and $\mathbf{B}$ the Receiver. $\mathbf{R}$ assists in the computation, by keep their shares private until the final reconstruction step. We expand these results, to show how any $N$-party Boolean function can be evaluated with perfect privacy, by demonstrating how to construct a sharing of a bit conjunction at arbitrary order. In both examples below, we focus on building a $\mathcal{F}$ that calls only $\mathcal{F}_{\text{TD}}$ and the ideal public broadcast channel. One can then construct the corresponding $\Pi$, by simply replacing each call to $\mathcal{F}_{\text{TD}}$ with $\Pi_{\text{QTD}}$.

\medskip\noindent\textbf{Referee-assisted 1-out-of-2 OT.} The ability to instantiate a functionality for OT from multiplication triples should be unsurprising. Their are known examples for distributing triples that function in the other direction, utilizing OT \cite{Frederiksen-2016-UA, Keller-2016-MF}. Without any additional functionalities in the offline phase, a functionality for 1-out-of-2 OT, $\mathcal{F}_{\text{OT}}$, can be constructed from only $\mathcal{F}_{\text{TD}}$ and a public broadcast channel, depicted in Fig. \ref{fig:F-OT}. Here $\mathbf{A}$ plays the role of the Sender, choosing a pair of input bits $a_{0},a_{1}$, and $\mathbf{B}$ plays the Receiver, choosing a single bit $b$. They work together with $\mathbf{R}$ to obtain a pair of triples, from which they help $\mathbf{B}$ securely compute
\begin{equation}
    f(a_{0},a_{1},b)=a_{0}(b\oplus1)\oplus a_{1}b, \label{eq:OT}
\end{equation}
in the online the phase. In order to keep the result $\mathbf{B}$ gets private from $\mathbf{A}$ and $\mathbf{R}$, only $\mathbf{A}$ and $\mathbf{R}$ broadcast messages during the reconstruction step.

\begin{figure}[]
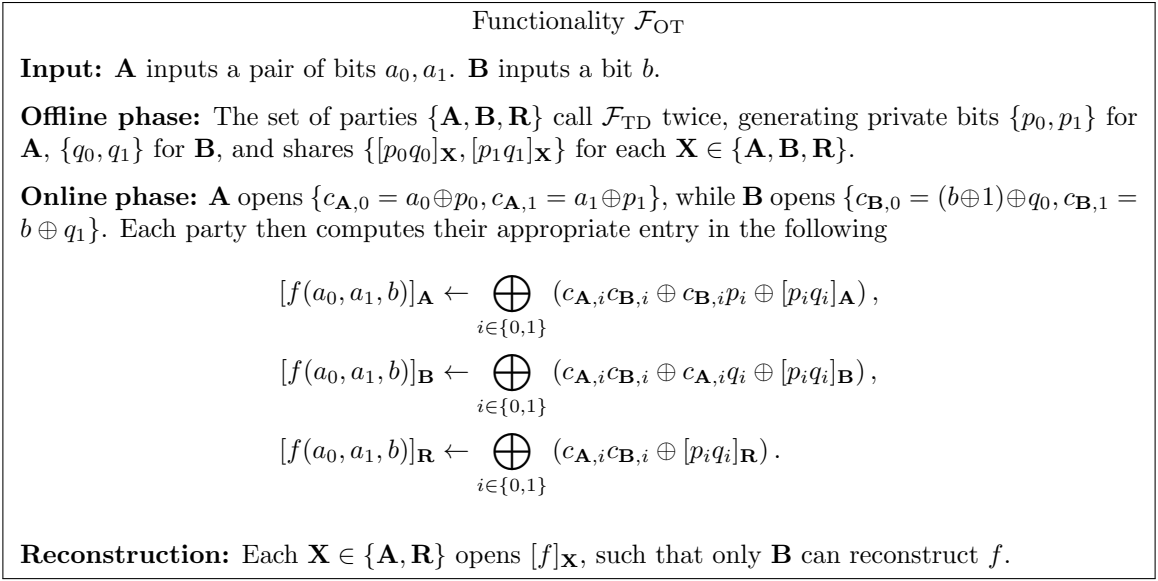

    \centering
    \framebox{
        \begin{minipage}[t]{0.9\textwidth} 
            \centerline{Functionality $\mathcal{F}_{\text{OT}}$}
            \medskip
            \textbf{Input:} $\mathbf{A}$ inputs a pair of bits $a_{0},a_{1}$. $\mathbf{B}$ inputs a bit $b$.
            
            \medskip
            \textbf{Offline phase:} The set of parties $\{\mathbf{A},\mathbf{B},\mathbf{R}\}$ call $\mathcal{F}_{\text{TD}}$ twice, generating private bits $\{p_{0},p_{1}\}$ for $\mathbf{A}$, $\{q_{0},q_{1}\}$ for $\mathbf{B}$, and shares $\{[p_{0}q_{0}]_{\mathbf{X}},[p_{1}q_{1}]_{\mathbf{X}}\}$ for each $\mathbf{X}\in\{\mathbf{A},\mathbf{B},\mathbf{R}\}$.

            \medskip
            \textbf{Online phase:} $\mathbf{A}$ opens $\{c_{\mathbf{A},0}=a_{0}\oplus p_{0},c_{\mathbf{A},1} =a_{1}\oplus p_{1}\}$, while $\mathbf{B}$ opens $\{c_{\mathbf{B},0}=(b\oplus 1)\oplus q_{0},c_{\mathbf{B},1} =b\oplus q_{1}\}$. Each party then computes their appropriate entry in the following
            \begin{align*}
                [f(a_0,a_1,b)]_{\mathbf{A}}&\leftarrow\bigoplus_{i\in\{0,1\}} \left(c_{\mathbf{A},i}c_{\mathbf{B},i}\oplus c_{\mathbf{B},i}p_{i}\oplus[p_{i}q_{i}]_{\mathbf{A}} \right), \\
                [f(a_0,a_1,b)]_{\mathbf{B}}&\leftarrow\bigoplus_{i\in\{0,1\}} \left(c_{\mathbf{A},i}c_{\mathbf{B},i}\oplus c_{\mathbf{A},i}q_{i}\oplus[p_{i}q_{i}]_{\mathbf{B}} \right), \\
                [f(a_0,a_1,b)]_{\mathbf{R}}&\leftarrow\bigoplus_{i\in\{0,1\}} \left(c_{\mathbf{A},i}c_{\mathbf{B},i}\oplus [p_{i}q_{i}]_{\mathbf{R}} \right). 
            \end{align*}
            
            \medskip
            \textbf{Reconstruction:} Each $\mathbf{X}\in\{\mathbf{A},\mathbf{R}\}$ opens $[f]_{\mathbf{X}}$, such that only $\mathbf{B}$ can reconstruct $f$.
        \end{minipage}
    }
    \caption{A functionality for implementing 1-out-of-2 OT, utilizing only $\mathcal{F}_{\text{TD}}$ and broadcast messaging.}
    \label{fig:F-OT}
\end{figure}

\medskip\noindent\textbf{Boolean MPC.} 
Suppose that $N$-parties, $\{\mathbf{P}_{1},\cdots,\mathbf{P}_{N}\}$, which to evaluate some conjunction of bits
\begin{equation}
    f^{(N)}(x_{1},\cdots,x_{N})=x_{1}\cdots x_{N}=\prod_{k\in\{1,\cdots,N\}}x_{k},
\end{equation}
where each party $\mathbf{P}_{k}$ supplies an input $x_{k}$. Clearly, when $N=2$, the pair, $\{\mathbf{P}_{1},\mathbf{P}_{2}\}$, can work together with a Referee, $\mathbf{R}$, to generate a single triple of the form given by $\mathcal{F}_{\text{TD}}$, from which they can achieve an additive sharing of the quadratic bit conjunction $[x_{1}x_{2}]$ between $\{\mathbf{P}_{1},\mathbf{P}_{2},\mathbf{R}\}$ with only a single round of openings. As we proved above, this can be implemented with perfect privacy with our protocol, given that any pair in $\{\mathbf{P}_{1},\mathbf{P}_{2},\mathbf{R}\}$ is honest, and an ideal resource state is consumed. In the general case, where we have a set of parties, $\{\mathbf{P}_{1},\cdots,\mathbf{P}_{N},\mathbf{R}\}$, we can build up higher order conjunctions using the inductive argument that follows. 

Suppose the set $\{\mathbf{P}_{1},\cdots,\mathbf{P}_{N-1},\mathbf{R}\}$ already holds an additive sharing of $[x_{1}\cdots x_{N-1}]$, which they wish to promote to the $N^{\text{th}}$ order conjunction by working with a party $\mathbf{P}_{N}$, holding a private $x_{N}$. First, $\mathbf{R}$ opens theirs share $[x_{1}\cdots x_{N-1}]_{\mathbf{R}}$, and, without loss of generality, we assume $\mathbf{P}_{N-1}$ absorbs this bit into their share $[x_{1}\cdots x_{N-1}]_{\mathbf{P}_{N-1}}$. Note that having $\mathbf{R}$ make this additional opening is not a problem, as ideally this bit is independent and uniformly random, and therefore reveals nothing about the secret being shared. The goal of this step is simply to have $\mathbf{R}$ remove themselves from the sharing, which if necessary can also be achieved by calling on some additional secret sharing functionality as well. Now that the sharing $[x_{1}\cdots x_{N-1}]$ is held by only $N-1$ parties, we consume $N-1$ triples, each held between a subset $\{\mathbf{P}_{k},\mathbf{P}_{N},\mathbf{R}\}$ for each $k\in\{1,\cdots,N\}$, such that each possible subset can generate their own sharing of $[[x_1\cdots x_{N-1}]_{\mathbf{P_{k}}}x_{N}]$. From here it follows
\begin{equation}
    x_{1}\cdots x_{N} = \bigoplus_{k\in\{1,\cdots,N-1\}} \left([[x_{1}\cdots x_{N-1}]_{\mathbf{P}_{k}}x_{N}]_{\mathbf{P}_{k}} \oplus[[x_{1}\cdots x_{N-1}]_{\mathbf{P}_{k}}x_{N}]_{\mathbf{P}_{N}} \oplus[[x_{1}\cdots x_{N-1}]_{\mathbf{P}_{k}}x_{N}]_{\mathbf{R}} \right). \label{eq:Nth-eval}
\end{equation}
In total, generating a sharing of an $N^{\text{th}}$-order bit conjunction requires a number of triples, $\sum_{k\in\{1,\cdots,N-1\}}k=N(N-1)/2=\binom{N}{2}\leq N^{2}/2$. In the offline phase, each possible combination of parties, $\{\mathbf{P}_{k},\mathbf{P}_{l}\}$, work with $\mathbf{R}$ to call $\mathcal{F}_{\text{TD}}$ (or $\Pi_{\text{QTD}}$ in the real instantiation). The round complexity required here in the online phase is $N-1$, as all of the openings required to locally evaluate Eq. \ref{eq:Nth-eval} can happen concurrently, which holds when building up each higher order bit conjunction. It is similarly simple to have $\mathbf{R}$ open each $[pq]_{\mathbf{R}}$ they obtain from each call $\mathcal{F}_{\text{TD}}$, at the same time prior to any other broadcasts in the online phase. This adds only a single round to the overall complexity. As an explicit example, Fig. \ref{fig:F-f3} demonstrates one way that parties $\{\mathbf{P}_{1},\mathbf{P}_{2},\mathbf{P}_{3},\mathbf{R}\}$ can work together to privately evaluate a quartic bit conjunction, $f^{(3)}(x_{1},x_{2},x_{3})=x_{1}x_{2}x_{3}$, through a functionality, $\mathcal{F}_{f^{(3)}}$ that calls only $\mathcal{F}_{\text{TD}}$ and the ideal broadcast channel.

When calling on the real $\Pi_{\text{QTD}}$ to distribute triples, the simplest way to maintain a concrete understanding of privacy in this multi-party setting, is to require that the $\mathbf{R}$ be among the subset of honest parties. Given an honest $\mathbf{R}$, then any $\mathbf{P}_{k}$ can guarantee the privacy of their inputs by playing honestly themselves. Therefore, while $\Pi_{\text{QTD}}$ requires an honest majority within any subset $\{\mathbf{P}_{k},\mathbf{P}_{l},\mathbf{R}\}$ of the full set of parties, when $N>3$, a composition of only calls to our protocol and an ideal broadcasting channel allows for the evaluation of any Boolean function with an honest minority. Furthermore, in each call to $\Pi_{\text{QTD}}$, an honest $\mathbf{R}$'s opening of their share $[pq]_{\mathbf{R}}$ has no effect on the ability of a corrupt $\mathbf{A}^{*}$ or $\mathbf{B}^{*}$ to discern any sensitive information. This is ensured by the structure of the phase information in Eqs. \ref{eq:S-CA} and \ref{eq:S-CB}, where the elements of the other honest party's view remains as a pad.

\begin{figure}[]
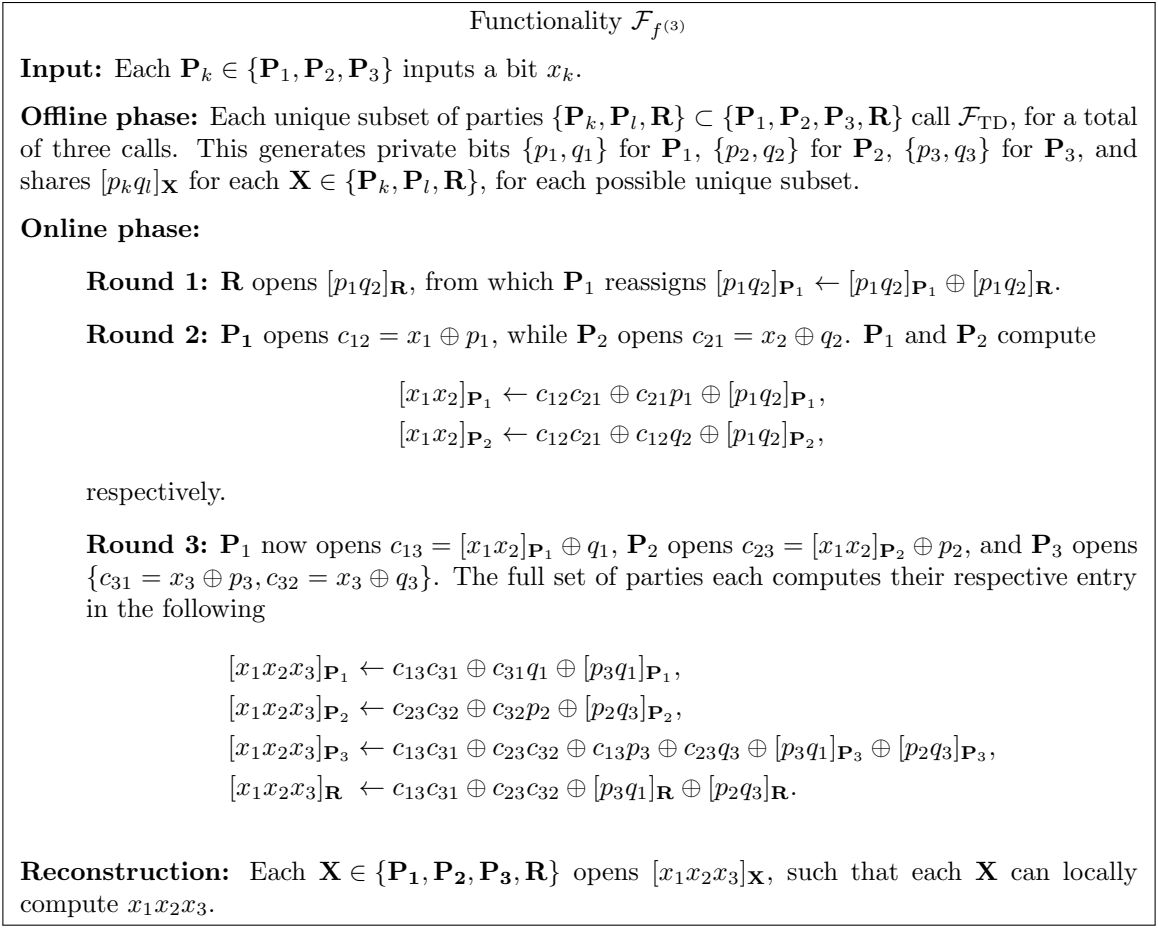

    \centering
    \framebox{
        \begin{minipage}[t]{0.9\textwidth} 
            \centerline{Functionality $\mathcal{F}_{f^{(3)}}$}
            \medskip
            \textbf{Input:} Each $\mathbf{P}_{k}\in\{\mathbf{P}_{1},\mathbf{P}_{2},\mathbf{P}_{3}\}$ inputs a bit $x_{k}$.
            
            \medskip
            \textbf{Offline phase:} Each unique subset of parties $\{\mathbf{P}_{k},\mathbf{P}_{l},\mathbf{R}\}\subset\{\mathbf{P}_{1},\mathbf{P}_{2},\mathbf{P}_{3},\mathbf{R}\}$ call $\mathcal{F}_{\text{TD}}$, for a total of three calls. This generates private bits $\{p_{1},q_{1}\}$ for $\mathbf{P}_{1}$, $\{p_{2},q_{2}\}$ for $\mathbf{P}_{2}$, $\{p_{3},q_{3}\}$ for $\mathbf{P}_{3}$, and shares $[p_{k}q_{l}]_{\mathbf{X}}$ for each $\mathbf{X}\in\{\mathbf{P}_{k},\mathbf{P}_{l},\mathbf{R}\}$, for each possible unique subset.

            \medskip
            \textbf{Online phase:} 
            \begin{itemize}
                \item[]\textbf{Round 1:} $\mathbf{R}$ opens $[p_{1}q_{2}]_{\mathbf{R}}$, from which $\mathbf{P}_{1}$ reassigns $[p_{1}q_{2}]_{\mathbf{P}_{1}}\leftarrow[p_{1}q_{2}]_{\mathbf{P}_{1}}\oplus [p_{1}q_{2}]_{\mathbf{R}}$.
                \item[]\textbf{Round 2:} $\mathbf{P_{1}}$ opens $c_{12}=x_{1}\oplus p_{1}$, while $\mathbf{P}_{2}$ opens $c_{21}=x_{2}\oplus q_{2}$. $\mathbf{P}_{1}$ and $\mathbf{P}_{2}$ compute
                \begin{align*}
                    [x_{1}x_{2}]_{\mathbf{P}_{1}}&\leftarrow c_{12}c_{21}\oplus  c_{21}p_{1}\oplus [p_{1}q_{2}]_{\mathbf{P}_{1}}, \\
                    [x_{1}x_{2}]_{\mathbf{P}_{2}}&\leftarrow c_{12}c_{21}\oplus  c_{12}q_{2}\oplus [p_{1}q_{2}]_{\mathbf{P}_{2}},
                \end{align*}
                respectively.
                \item[]\textbf{Round 3:} $\mathbf{P}_{1}$ now opens $c_{13}=[x_{1}x_{2}]_{\mathbf{P}_{1}}\oplus q_{1}$, $\mathbf{P}_{2}$ opens $c_{23}=[x_{1}x_{2}]_{\mathbf{P}_{2}}\oplus p_{2}$, and $\mathbf{P}_{3}$ opens $\{c_{31}=x_{3}\oplus p_{3},c_{32}=x_{3}\oplus q_{3}\}$. The full set of parties each computes their respective entry in the following
                \begin{align*}
                    [x_{1}x_{2}x_{3}]_{\mathbf{P}_{1}} &\leftarrow c_{13}c_{31}\oplus c_{31}q_{1}\oplus[p_{3}q_{1}]_{\mathbf{P}_{1}}, \\
                    [x_{1}x_{2}x_{3}]_{\mathbf{P}_{2}} &\leftarrow c_{23}c_{32}\oplus c_{32}p_{2}\oplus[p_{2}q_{3}]_{\mathbf{P}_{2}}, \\
                    [x_{1}x_{2}x_{3}]_{\mathbf{P}_{3}} &\leftarrow c_{13}c_{31}\oplus  c_{23}c_{32}\oplus c_{13}p_{3}\oplus c_{23}q_{3}\oplus [p_{3}q_{1}]_{\mathbf{P}_{3}}\oplus[p_{2}q_{3}]_{\mathbf{P}_{3}}, \\
                    [x_{1}x_{2}x_{3}]_{\mathbf{R}_{\,\,}} &\leftarrow c_{13}c_{31}\oplus  c_{23}c_{32}\oplus [p_{3}q_{1}]_{\mathbf{R}}\oplus[p_{2}q_{3}]_{\mathbf{R}}.
                \end{align*}
            \end{itemize}

            \medskip
            \textbf{Reconstruction:} Each $\mathbf{X\in\{\mathbf{P}_{1},\mathbf{P}_{2},\mathbf{P}_{3},\mathbf{R}\}}$ opens $[x_{1}x_{2}x_{3}]_{\mathbf{X}}$, such that each $\mathbf{X}$ can locally compute $x_{1}x_{2}x_{3}$.
        \end{minipage}
    }
    \caption{A functionality for privately evaluating a quartic bit conjunction, utilizing only $\mathcal{F}_{\text{TD}}$ and broadcast messaging.}
    \label{fig:F-f3}
\end{figure}

\section{\label{sec:DSC}Discussion}
In this work we defined the problem of QTD, which involves disseminating private binary multiplication triples from entangled quantum resources, even in the presence of a malicious adversary. We introduced a resource graph state and measurement-based protocol that accomplishes this task with provable perfect privacy in an honest majority setting, without any assumptions on classical channels. In contrast to symmetric forms of shared randomness, such as shared key, no one party can know the complete correlation that fixes the triple. Here, multipartite entangled states provide a fundamental advantage. We can encode classical correlations within them that are not uniquely determined by a single local view.

In the wider context of MPC within the pre-processing model, QTD serves a useful offline primitive for privately distributing correlated shared randomness, with assumptions weaker than those in any purely classical setting. Rather than requiring a trusted Dealer that does not know the evaluating function and private peer-to-peer channels, our methodology involves a Referee that can know the function in the question and does not place any requirements on private channels, something that is only possible when utilizing shared randomness obtained from quantum resources. This works demonstrates how an honest Referee and ideal entangled quantum resources, allows an honest minority to compute any function with information-theoretic malicious-secure privacy.

\subsubsection*{Acknowledgments}
This work was supported by the NSF Quantum Leap Challenge Institute on Hybrid Quantum Architectures and Networks (NSF Award No. 2016136). M.G. thanks Shouvik Ghorai, Sarah Hagen, and Selina Nie for helpful discussions. 

\printbibliography

\appendix

\section{\label{sec:APP-Graph states}Obtaining distributed correlations from graph states}
For an arbitrary graph $G=(V,E)$ with vertices $V=\{1,\cdots,n\}$ and edge set $E\subseteq V\times V$, consider the $n$-qubit operator obtained by globally performing a controlled$-Z$ gate, $CZ_{i,j}$, between every $\{i,j\}\in E$, denoted by
\begin{equation}
    U_{G}=\prod_{\{i,j\}\in E}CZ_{i,j}.
\end{equation}
The graph state associated with the graph $G$ is the $n$-qubit state 
\begin{equation}
    \ket{G}=U_{G}\ket{+}^{\otimes V},
\end{equation}
where $\ket{+}=(\ket{0}+\ket{1})/\sqrt{2}$. Note that the stabilizer of $\ket{+}^{\otimes n}$ is generated by a set of $n$ commuting operators $\{X_i\}_{i\in V}$.  Hence, the stabilizer of $\ket{G}$, is then generated by the set of correlation operators $\{K_i\}_{i\in V}$, such that
\begin{align}
    K_{i}&=U_{G}X_{i}U_{G} \nonumber\\
    &=X_{i}\prod_{j\in N_{i}}Z_j =X_{i}\prod_{j\in V}Z_{j}^{\Gamma(i,j)},
\end{align}
which we express either in terms of $N_{i}=\{j|\{i,j\}\in E\}$, the neighborhood of $G$ about qubit $i$, or $\Gamma(i,j)$, the corresponding adjacency matrix of $G$. 

For any $n$-qubit graph state $\ket{G}$, we can generate an orthonormal basis for $\mathbb{C}_2^N$, called the \textit{associated graph basis}. These basis vectors have the form $Z^{\vec{r}}\ket{G}$, with
\begin{equation}
    Z^{\vec{r}}=\prod_{i,\in V}Z_{i}^{r_{i}},
\end{equation}
where $\mathbf{r}=(r_1,r_2,\cdots,r_n)\in\mathbb{Z}_2^n$ forms a conditional phase vector with each bit $r_i$ as the phase information associated with qubit $i$. Let $\vec{r}$ and $\vec{r}'$ be distinct vectors of phase information that differ at position $j$, such that $K_{j}$ anti-commutes with $Z^{\vec{r}}Z^{\vec{r}'}$. Its evident that
\begin{equation}
    \braket{G|Z^{\vec{r}}Z^{\vec{r}'}|G}=\braket{G|Z^{\vec{r}}Z^{\vec{r}'}K_{j}|G}=-\braket{G|K_{j}Z^{\vec{r}}Z^{\vec{r}'}|G}=-\braket{G|Z^{\vec{r}}Z^{\vec{r}'}|G}=0,
\end{equation}
implying that $\{Z^{\vec{r}}\ket{G}\}_{\vec{r}\in\mathbb{Z}_2^n}$ form an orthonormal basis. Hence, the state $Z^{\vec{r}}\ket{G}$ is a unique pure state, defined by the phase information $\{r_{i}\}_{i\in V}$, or equivalently the set of correlation operators $\{(-1)^{r_{i}}K_{i}\}_{i\in V}$.

We will be particularly interested in how the associated basis states transform under the action of local Pauli measurements. For a vector $\vec{r}\in \mathbb{Z}_2^n$ and subset of nodes $\Omega\subseteq V$, let $\vec{r}-\Omega$ denote a vector of length $n-|\Omega|$, obtained from $\vec{r}$ by removing the coordinates in $\Omega$.  Suppose that for an arbitrary associated basis state, a Pauli observable is measured on qubit $i$ and an outcome $m_i\in\{0,1\}$ is recorded. The initial state transforms as follows,
\begin{subequations}
    \begin{align}
        Z_{i}:& \quad Z^{\vec{r}}\ket{G}\mapsto \left(\prod_{j\in N_{i}}  Z_{j}^{m_{i}}\right)Z^{\vec{r}-\{i\}}\ket{G-\{i\}}, \label{Eq:Z-measurement-post-correction}\\
        Y_{i}:& \quad Z^{\vec{r}}\ket{G}\mapsto \left(\prod_{j\in N_{i}} \left(-iZ_{j}\right)^{1/2}Z_{j}^{r_{i}+m_{i}}\right)Z^{\vec{r}-\{i\}}\ket{\tau_{i}(G)-\{i\}}, \label{Eq:Y-measurement-post-correction}\\
        X_{i}:& \quad Z^{\vec{r}}\ket{G}\mapsto (-iY_{j_{0}})^{1/2}Z_{j_{0}}^{r_{j_{0}}+r_{i}+m_{i}} \left(\prod_{j\in N_{i}-N_{j_{0}}-\{j_{0}\}}Z_{j}^{r_{j_{0}}}\right)\left(\prod_{j\in N_{j_{0}}-N_{i}-\{i\}}Z_{j}^{r_{i} + m_{i}}\right) \nonumber\\
        &\quad\quad\quad\quad\quad\quad\quad\quad\quad\quad\quad\quad\quad\quad\quad\quad\quad\quad\quad\quad\quad\quad\quad\quad\quad\quad\quad\quad \times Z^{\vec{r}-\{i\}}\ket{\tau_{j_{0}}(\tau_{i}\circ\tau_{j_{0}}(G)-\{i\})}, \label{Eq:X-measurement-post-correction}
    \end{align}
\end{subequations}
where we discard the register of the measured qubit for brevity. Here, $\tau_{i}(G)$ is the \textit{local complementation} of $G$ at vertex $i$, i.e. $\tau_{i}(G)$ is the graph $(V, E\Delta E(N_i,N_i))$ (where $\Delta$ is the symmetric difference), and $\tau_i(G)-\{i\}$ is the graph obtained by removing $i$ from $\tau_i(G)$ \cite{Hein-2004-Me}. In more detail,
\begin{equation}
    \ket{\tau_{i}(G)}=\left(-iX_i\right)^{1/2}\left(\prod_{j\in N_{i}}\left(iZ_{j}\right)^{1/2}\right)\ket{G}, \label{eq:local-comp}
\end{equation}
describes a set of single-qubit rotations comprising a local complementation. Note that the $Y$ and $X$ post-measurement states can always be transformed back to the associated graph basis by performing the appropriate local unitary operators. The choice of $j_{0}\in N_{i}$ for the $X$ post-measurement state is arbitrary, as each leads to a state equivalent up to a local complementation of $j_{0}'\neq j_{0}$ followed by one on $j_{0}$. In general, we say the measurements in the basis of $Y_{i}$ and $X_{i}$ on qubit $i$ transfers $r_{i}$ to qubits in vicinity of its neighbors within the new graph, while a measurement in the basis of $Z_{i}$ simply deletes $r_{i}$.

Let $S=\langle\{(-1)^{r_{i}}K_{i}\}_{i\in V}\rangle$ be the stabilizer of a graph state $Z^{\vec{r}}\ket{G}$. A \textit{perfectly correlated} set of measurement outcomes occurs, when an element of $S$ is measured, either as a set of local single qubit Pauli measurements or jointly across vertices of the graph. Given a $K_{i}$ that acts non-trivially on only qubits in the subset $\{i,N_{i}\}$ within the graph, measuring each qubit locally in the basis specified by $K_{i}$ leads to a correlation
\begin{equation}
    \Tr[K_{i}Z^{\vec{r}}\ketbra{G}Z^{\vec{r}}]=(-1)^{\bigoplus_{j\in \{i,N_{i}\}}m_{j}}=(-1)^{r_{i}},
\end{equation}
between the set of measurement outcomes $\{m_{j}\}_{j\in\{i,N_{i}\}}$. As $Z^{\vec{r}}\ket{G}$ is a unique basis state $\mathbb{C}_{2}^{n}$, the set of perfect correlations one can obtain from $Z^{\vec{r}}\ket{G}$ is uniquely fixed by $\{r_{i}\}_{i\in V}$, and any possible sum of these elements.



\end{document}